\newcounter{resultnum}[section]\setcounter{resultnum}{0}
\newtheorem{conclusion}{Conclusion}[section]
\newcounter{conclusionnum}[section]\setcounter{conclusionnum}{0}
\newcounter{conditionnum}[section]\setcounter{conditionnum}{0}
\newcounter{conjecturenum}[section]\setcounter{conjecturenum}{0}
\newcounter{examplenum}[section]\setcounter{examplenum}{0}
\newcounter{exercisenum}[section]\setcounter{exercisenum}{0}
\newtheorem{lemma}{Lemma}[section]
\newcounter{lemmanum}[section]\setcounter{lemmanum}{0}
\newcounter{notationnum}[section]\setcounter{notationnum}{0}
\newtheorem{theorem}{Theorem}[section]
\newcounter{theoremnum}[section]\setcounter{theoremnum}{0}
\newtheorem{definition}{Definition}[section]
\newcounter{definitionnum}[section]\setcounter{definitionnum}{0}
\newtheorem{corollary}{Corollary}[section]
\newcounter{corollarynum}[section]\setcounter{corollarynum}{0}
\newtheorem{remark}{Remark}[section]
\newcounter{remarknum}[section]\setcounter{remarknum}{0}
\newtheorem{proposition}{Proposition}[section]
\newcounter{propositionnum}[section]\setcounter{propositionnum}{0}
\newcounter{acknowledgementnum}[section]\setcounter{acknowledgementnum}{0}
\newcounter{algorithmnum}[section]\setcounter{algorithmnum}{0}
\newcounter{axiomnum}[section]\setcounter{axiomnum}{0}
\newcounter{casenum}[section]\setcounter{casenum}{0}
\newtheorem{claim}{Claim}[section]
\newcounter{claimnum}[section]\setcounter{claimnum}{0}
\newcounter{summarynum}[section]\setcounter{summarynum}{0}
\newcounter{problemnum}[section]\setcounter{problemnum}{0}
\newenvironment{proof}[1][]{\textbf{Proof.} }{}
\begin{document}

\title{Dynamical Equations and Lagrange--Ricci Flow Evolution on Prolongation Lie  Algebroids}
\date{March 02, 2018}

\author{
${}$ \\
Lauren\c{t}iu Bubuianu\\
{\small \textit{TVR Ia\c{s}i, \ 33 Lasc\v{a}r Catargi street, 700107 Ia\c{s}i, Romania }} \\
{\small and \textit{University Apollonia, 2 Muzicii street, Ia\c{s}i, Romania }} \\
{\small \textit{email: laurentiu.bubuianu@tvr.ro }}\\
${}$ \\
{Sergiu I. Vacaru}\footnote{{\it Address for correspondence:\ } 67 Lloyd Street South, Manchester, M14 7LF, the UK }\\
{\small \textit{Physics Department, California State University at Fresno,  CA 93740, USA}} \\
{\small and } \ {\small \textit{ Project IDEI, University "Al. I. Cuza" Ia\c si, Romania}}\\
{\small \textit{email: sergiu.vacaru@gmail.com ; sergiuvacaru@mail.fresnostate.edu}}
 }
\maketitle

\begin{abstract}
The approach to nonholonomic Ricci flows and geometric evolution of regular Lagrange systems [S. Vacaru: J. Math. Phys. \textbf{49} (2008) 043504 \& Rep. Math. Phys. \textbf{63} (2009) 95] is extended to include geometric mechanics and gravity models on Lie algebroids. We prove that such evolution scenarios of geometric mechanics and analogous gravity can be modeled as gradient flows characterized by generalized Perelman functionals if an equivalent geometrization of Lagrange mechanics [J. Kern, Arch. Math. (Basel) \textbf{25} (1974) 438] is considered. The R. Hamilton equations on Lie algebroids describing Lagrange-Ricci flows are derived. Finally, we show that geometric evolution models on Lie algebroids are described by effective thermodynamical values derived from  statistical functionals on prolongation Lie algebroids.
\vskip0.1cm

\textbf{Keywords:} Ricci flows, Lie algebroids, Lagrange mechanics, analogous gravity. \vskip3pt

\end{abstract}


\section{Introduction}

The Ricci flow theory \cite{ham1,ham2} became attractive for research in
mathematics and physics after G. Perelman successfully carried out his
program \cite{gper1,gper2,gper3} which resulted in proofs of Thurston and
Poincar\'{e} conjectures, see reviews  \cite%
{caozhu,kleiner,rbook}. The profound impact of such results on understanding
the topology and geometric structure of curved spacetime and fundamental
properties of classical and quantum interactions was used as a motivation to
study the geometric evolution of regular Lagrange systems \cite%
{vricci1,vricci2} on tangent bundles and nonholonomic (pseudo) Riemannian
and Einstein manifolds. We developed Ricci flow theories for classical and
quantum  solutions of Einstein equations, generalizations to
noncommutative, Finsler, diffusion, fractional spaces etc, see \cite%
{vncricci,vfinsler} and references therein.

Effective Lagrange and Hamilton models, Lie algebroid and almost K\"{a}hler
and Dirac structures are considered, for instance, in quantum gravity and
modified gravity theories when Ricci flows on parameters are derived from a
renormalization procedure with running/evolution of physical parameters etc \cite%
{cortes,vrflg,vqgr2}. One of the important tasks in modern geometry and
physics is to elaborate and analyze flow evolution of more complex
geometries and physical systems with nontrivial topology, generalized
symmetries, nonholonomic constraints etc. So, it is not an academic exercise and "pure" geometric interest to perform generalizations of the Ricci flow
theory derived for Lagrangians/Hamiltonians on Lie algebroids. Fundamental properties of spacetime topology seem to be related to a series of important questions on dimensions of real mechanical systems and physical interactions, analogous gravity modelling, renormalizability of certain quantum theories, possible modifications of gravity derived from modern cosmology observations etc. We need rigorous studies on evolution of theories with rich geometric structure, generalized and deformed symmetries and symplectic structure and nonholonomic constraints.

Specifically, the goal of this paper is to elaborate a model of geometric
evolution of Lagrange mechanics and analogous gravity theory on Lie algebroids using certain
constructions proposed and developed in Refs. \cite%
{kern,matsumoto,valg1,valg2}. The key idea considered in our works is that
physical theories can be encoded into the geometry of generalized nonholonomic spaces
(defined by corresponding classes of non--integrable constraints on
fundament dynamical and evolution equations) via "standard", or analogous,
geometric objects like metrics, (almost) symplectic forms, nonlinear and
linear connections, related curvatures and torsions, and their geometric
flows evolution. A subclass of evolution scenarios are uniquely determined
following geometric principles for entropy type functionals derived for
families of generating Lagrange functions $L(\mathbf{x,y},\chi )$.\footnote{%
We treat $(\mathbf{x,y)}$ as some generalized coordinates (for instance, on
a tangent bundle $TM,$ or a Lie algebroid $E$ over a manifold $M$) and $\chi$
is a real evolution parameter. In certain ''dual'' and, in some sense, more
general approaches, we can consider families of Hamiltonians $H(\mathbf{x,p}%
,\chi )$, (almost) symplectic and/or Poisson structures with associated
co--tangent bundle $T^{\ast }M$ etc. Here we also note that we use left low/up indices as labels for some geometric objects
and/or spaces.} Hopefully, such assumptions on geometric evolution mechanics
allows us to formulate an alternative and very different approach to and
provide us new possibilities to explore properties of Lagrange systems using
methods in geometric analysis.

Theories of Lagrange and Hamilton systems on Lie algebroids (and various
discrete analogs on Lie groupoids, Poisson structures and algebroids etc)
were proposed \cite{weinstein,libermann} and actively developed during last
ten years, see original contributions and reviews of results in Refs. \cite%
{cortes,martinez1,martinez2,dleon1}. On Lie algebroid gravity and gauge
interactions models, we cite \cite{strobl1,valg1,valg2} and references
therein. The inclusive nature of Lie algebroid formalism allows us to
describe very different situations in mechanics and physics such as
Lagrangian systems with symmetry and nonholonomic constraints, theories with
semidirect products and/or evolving on Lie algebras and generalizations. It
is possible in such cases to derive some Lagrange/Euler --Poincar\'{e}, or
Euler--Lagrange equations and geometrize such systems as generalized Poisson
geometries etc. New tools have been introduced and new understanding have
been provided, for instance, by the multi--symplectic formalism and
Poisson--Nijenhuis Lie algebroid theory etc.

Nevertheless, we have to consider additional and alternative constructions
for above mentioned algebroid models of geometric mechanics and
classical/quantum field theories if we wont to study the Ricci flow
evolution of systems and spaces with ''rich'' geometric and physical
structure keeping certain analogy with the Hamilton--Perelman theory. It is
not clear how the standard formalism elaborated for Ricci flows of (semi)
Riemann and (almost) K\"{a}hler geometries can be extended to describe
directly flow evolution of models of Lie algebroid mechanics developed in
Refs. \cite{weinstein,libermann,martinez1,martinez2,dleon1}.

Our proposal is to use J. Kern's \cite{kern} constructions on Lagrange
spaces (the term is due to that article developing in a ''nonhomogenous''
manner the M. Matsumoto results on Finsler connections \cite{matsumoto}, see
references therein; further developments and applications, for instance, in
modern classical and quantum gravity \cite{vncricci,vrflg,vqgr2}. In such an
approach, the nondegenerate Hessian of a regular Lagrangian can be treated
as a metric structure for fibers on $TM$ which can be extended on total
space using the so--called Sasaki lifts \cite{yano}. It is involved also a
corresponding semi--spray structure inducing a canonical nonlinear
connection (in brief, N--connection; the global definition is due to \cite%
{ehresmann}, see historical remarks and applications in modern mechanics and
gravity in \cite{vrflg}). For such geometric data, a model of
Lagrange--Ricci flow theory \cite{vricci1,vricci2} can be formulated in
N--adapted form, via corresponding generalizations of Perelman's
functionals, on tangent bundles and/or nonholonomic\footnote{%
equivalently, anholonomic and/or non--integrable, which refer to certain
classes of distributions on a geometric/physical space} (semi) Riemannian
manifolds.

The paper is organized as follows. In section \ref{s2}, we survey the
geometry of Lie algebroids and prolongations and geometrization of Lagrange
mechanics on such spaces following approach from \cite%
{cortes,martinez1,martinez2,dleon1}. We summarize necessary tools from the
geometry of N--connections on prolongation Lie algebroids in section \ref{s3}%
. The constructions are performed in metric compatible form which allows us
to formulate an analogous N--adapted gravity model on Lie algebroids. An
alternative geometrization of regular Lagrange mechanics and analogous
modeling of gravity following Kern--Matsumoto ideas extended for
prolongation Lie algebroids is provided. Section \ref{s4} is devoted to Main
Theorems for Lagrange--Ricci flows on prolongation Lie algebroids. \vskip5pt

\section{ Lagrange Mechanics and Lie Algebroids}

\label{s2}We outline basic concepts and definitions for Lie algebroids and
geometric mechanics with regular Lagrangians on prolongations of Lie
algebroids on bundle maps, see  \cite{cortes,martinez1,dleon1} and
references therein.

\subsection{Linear connections and metrics on Lie algebroids}

A \textbf{Lie algebroid }$\mathcal{E}=(E,\left\lfloor \cdot ,\cdot
\right\rfloor ,\rho )$ over a manifold $M$ is a triple defined by 1) a real
\textit{vector bundle} $\tau :E\rightarrow M$ together with 2) a \textit{Lie bracket} $\left\lfloor \cdot ,\cdot \right\rfloor $ on the spaces of global
sections $Sec(\tau )$ \ of map $\tau $ and 3) the \textit{anchor} map
$\rho :E\rightarrow TM$ defined as a bundle map over identity and constructed such
that the homorphism $\rho :Sec(\tau )\rightarrow \mathcal{X}(M)$ of $%
C^{\infty }(M)$--modules $\mathcal{X}$ induced this map satisfies the
condition%
\begin{equation*}
\left\lfloor X,fY\right\rfloor =f\left\lfloor X,Y\right\rfloor +\rho
(X)(f)Y,~\forall X,Y\in Sec(\tau )\mbox{ and }f\in C^{\infty }(M).
\end{equation*}%
For a Lie algebroid, the anchor map $\rho $ is equivalent to a homomporphysm between the Lie algebras $\left( Sec(\tau ),\left\lfloor \cdot ,\cdot \right\rfloor \right) $ and $\left( \mathcal{X}(M),\left\lfloor \cdot ,\cdot \right\rfloor \right) .$

In local form, the properties of a Lie algebroid $\mathcal{E}$ are
determined by the local functions $\rho _{\alpha }^{i}(x^{k})$ and $%
C_{\alpha \beta }^{\gamma }(x^{k})$ on $M,$ where $x=\{x^{k}\}$ are local
coordinates on a chart $U\subset M,$ with $\rho (e_{\alpha })=\rho _{\alpha
}^{i}(x)\partial _{i}$ and $\left\lfloor e_{\alpha },e_{\beta }\right\rfloor
=C_{\alpha \beta }^{\gamma }(x)e_{\gamma }$, satisfying the following
equations%
\begin{equation*}
\rho _{\alpha }^{i}\partial _{i}\rho _{\beta }^{j}-\rho _{\beta
}^{i}\partial _{i}\rho _{\alpha }^{j}=\rho _{\gamma }^{j}C_{\alpha \beta
}^{\gamma }\mbox{ \ and \ }\sum\limits_{\mbox{
cyclic \ }(\alpha ,\beta ,\gamma )}\left( \rho _{\alpha }^{i}\partial
_{i}C_{\beta \gamma }^{\nu }+C_{\beta \gamma }^{\mu }C_{\alpha \mu }^{\nu
}\right) =0.
\end{equation*}

A linear connection $D$ on $\mathcal{E}$ is defined as a $\mathbb{R}$%
--bilinear map $D:Sec(E)\times Sec(E)\rightarrow Sec(E)$ such that $\forall
~f\in C^{\infty }(M)$ and $\forall ~X,Y\in Sec(E)$ this covariant derivative
operator satisfies the conditions
\begin{equation}
D_{fX}Y=fD_{X}Y\mbox{\ and \ }D_{X}(fY)=\rho (X)(f)Y+fD_{X}Y.  \label{cond1}
\end{equation}%
Locally, $D$ is given by its coefficients $\Gamma _{~\beta \gamma }^{\gamma}
$ when $D_{X}Y=X^{\alpha }(\rho _{\alpha }^{i}\partial _{i}Y^{\gamma
}+\Gamma _{~\alpha \beta }^{\gamma }Y^{\beta })e_{\gamma }$, where $%
X=X^{\alpha }e_{\alpha }$ and $Y=Y^{\alpha }e_{\alpha }$ for a local basis $%
\{e_{\alpha }\}\in Sec(E).$ A curve $a:I\rightarrow E$ given by a function $%
a(\tau )=$ $a^{\alpha }(\tau )e_{\alpha }$ on a real parameter $\tau ,$ is
said to be an auto--parallel of $D$ if $D_{a}a=0.$

The exterior differential on $\mathcal{E}$ can be defined in standard form
using the operator $d$ on $\mathcal{E},$ when $d:Sec(\bigwedge\nolimits^{k}%
\tau ^{\ast })\rightarrow Sec(\bigwedge\nolimits^{k+1}\tau ^{\ast
}),d^{2}=0, $ where $\bigwedge $ is the antisymmetric product operator, see
details in Refs. \cite{higgins, martinez1,dleon1,cortes}. The local
contributions of a N--connection can be seen from such formulas for a smooth
formula $f:M\rightarrow \mathbb{R},df(X)=\rho (X)f,$ for $X\in Sec(\tau ),$
when
\begin{equation*}
dx^{i}=\rho _{\alpha }^{i}e^{\alpha } \mbox{ and }de^{\gamma }=-\frac{1}{2}%
C_{\alpha \beta }^{\gamma }e^{\alpha }\wedge e^{\beta }.
\end{equation*}
With respect to any section $X,$ we can define the Lie derivative%
\begin{equation}
\mathcal{L}_{X}=i_{X}\circ d+d\circ i_{X}:~Sec(\bigwedge\nolimits^{k}\tau
^{\ast })\rightarrow Sec(\bigwedge\nolimits^{k}\tau ^{\ast }),
\label{lderiv}
\end{equation}%
using the cohomology operator $d$ and its inverse $i_{X},$ see details in %
\cite{mackenzie,dleon1,cortes}.

A metric $\varpi $ on $\mathcal{E}$ \ is defined as a map
\begin{equation}
\varpi :E\times _{M}E\rightarrow \mathbb{R}.  \label{algmetric}
\end{equation}%
Locally, $\varpi =\varpi _{\alpha \beta }(x)e^{\alpha }\otimes e^{\beta }.$
We shall use also the inverse matrix/ metric, $\varpi ^{\alpha \beta }(x).$

There is a ''preferred'' linear connection $~^{\varpi }\nabla $ on $\mathcal{%
E}$ \ (the analog of the Levi--Civita connection in Riemannian geometry)
completely defined by a metric $\varpi .$ This connection is uniquely
determined following two conditions:%
\begin{eqnarray*}
~^{\varpi }T(X,Y)=\ ^{\varpi }\nabla _{X}Y-~^{\varpi }\nabla
_{Y}X+\left\lfloor X,Y\right\rfloor &=&0,\mbox{ i.e. zero torsion}; \\
\varpi (~^{\varpi }\nabla _{X}Y,Z)+\varpi (Y,~^{\varpi }\nabla _{X}Z)-\rho
(X)\left( \varpi (Y,Z)\right) &=&0,\mbox{i.e. metricity},
\end{eqnarray*}%
which result in formula%
\begin{eqnarray*}
2\varpi (~^{\varpi }\nabla _{X}Y,Z) &=&\rho (X)\left( \varpi (Y,Z)\right)
+\rho (Y)\left( \varpi (X,Z)\right) -\rho (Z)\left( \varpi (X,Y)\right) \\
&&-\varpi (~X,\left\lfloor Y,Z\right\rfloor )+\varpi (~Y,\left\lfloor
Z,X\right\rfloor )-\varpi (~Z,\left\lfloor Y,X\right\rfloor ),
\end{eqnarray*}%
$\forall X,Y,Z\in Sec(E).$ The curvature of $~^{\varpi }\nabla $ on $%
\mathcal{E}$ (the analog of the Riemannian tensor on standard manifolds) is
defined in standard from
\begin{equation*}
~^{\varpi }R(X,Y)Z=\left( \ ^{\varpi }\nabla _{X}~^{\varpi }\nabla
_{Y}-~^{\varpi }\nabla _{Y}\ ^{\varpi }\nabla _{X}~-~^{\varpi }\nabla
_{\left\lfloor X,Y\right\rfloor }\right) Z.
\end{equation*}

Introducing in above formulas $X=e_{\alpha },Y=e_{\beta },Z=e_{\gamma },$
for {\small
\begin{eqnarray}
\ ^{\varpi }\nabla _{e_{\alpha }}e_{\beta } &:= &\ ^{\varpi }\nabla _{\alpha
}e_{\beta }=\Gamma _{~\beta \alpha }^{\gamma }e_{\gamma },  \label{lccon} \\
\Gamma _{~\beta \gamma }^{\alpha } &=&\frac{1}{2}\varpi ^{\alpha \varphi
}(\rho _{\gamma }^{i}\partial _{i}\varpi _{\beta \varphi }+\rho _{\beta
}^{i}\partial _{i}\varpi _{\gamma \varphi }-\rho _{\varphi }^{i}\partial
_{i}\varpi _{\beta \gamma }  \notag \\
&&+C_{\varphi \gamma }^{\tau }\varpi _{\tau \beta }+C_{\varphi \beta }^{\tau
}\varpi _{\tau \gamma }-C_{\beta \gamma }^{\tau }\varpi _{\tau \varphi }),
\notag
\end{eqnarray}
} we compute the coefficients of torsion and curvature of the Levi--Civita
connection, respectively,
\begin{eqnarray*}
\ ^{\varpi }T_{~\beta \alpha }^{\gamma } &=&\Gamma _{~\beta \alpha }^{\gamma
}-\Gamma _{~\alpha \beta }^{\gamma }+C_{\alpha \beta }^{\gamma }=0%
\mbox{ and
} \\
\ ^{\varpi }R_{~\beta \gamma \delta }^{\alpha } &=&\rho _{\delta
}^{i}\partial _{i}\Gamma _{~\beta \gamma }^{\alpha }-\rho _{\gamma
}^{i}\partial _{i}\Gamma _{~\beta \delta }^{\alpha }+\Gamma _{~\beta \gamma
}^{\varphi }\Gamma _{~\varphi \delta }^{\alpha }-\Gamma _{~\beta \delta
}^{\varphi }\Gamma _{~\varphi \gamma }^{\alpha }+\Gamma _{~\beta \varphi
}^{\alpha }C_{\gamma \delta }^{\varphi }.
\end{eqnarray*}%
In standard form, we define the Ricci tensor contracting respective indices,
$\ ^{\varpi }Ric=\{~^{\varpi }R_{~\beta \gamma }:= \ ^{\varpi }R_{~\beta
\gamma \alpha }^{\alpha }\}$, and the scalar curvature, $~^{\varpi
}R:=\varpi ^{\alpha \beta }~^{\varpi }R_{~\alpha \beta }.$ Such formulas are
very similar to those for (pseudo) Riemannian geometry formulated in
nonholonomic bases satisfying anholonomy relations for some nontrivial
coefficients $C_{\gamma \delta }^{\varphi }.$ For the case of Lie
algebroids, the fundamental geometric objects on the space $Sec(E).$ The
above formulas on metrics, connections and Ricci tensors can be used for
elaborating a Ricci Lie algebroid evolution theory. Nevertheless, there are
necessary a number of additional assumptions and constructions in order to
include in such a scheme models of Lagrange mechanics and classical and
quantum field theories.

\subsection{The prolongation of Lie algebroids and Lagrange mechanics}

In Refs. \cite{martinez1,dleon1,cortes}, a geometric formalism for Lagrange
mechanics on Lie algebroids was developed using the concept of prolongation
of a Lie algebroid over a fibration (in brief, prolongation Lie algebroid).
Let us briefly outline some basic constructions.

Consider a Lie algebroid $\mathcal{E}=(E,\left\lfloor \cdot ,\cdot
\right\rfloor ,\rho )$ and a fibration $\pi :P\rightarrow M$ \ both defined
over the same manifold $M.$ We denote local coordinates in the form $%
(x^{i},u^{A})\in P$ \ write $\{e_{\alpha }\}$ for a local basis of sections
of $E.$ For our purposes, we can consider that $P=E$. The anchor map $\rho
:E\rightarrow TM$ and the tangent map $T\pi :TP\rightarrow TM,$ can be used
to construct a subset $\mathcal{T}_{s}^{E}P:=\{(b,v)\in E_{x}\times
T_{x}P;\rho (b)=T_{p}\pi (v);p\in P_{x},\pi (p)=x\in M\}$. Globalizing the
construction, we obtain another Lie algebroid, $\mathcal{T}%
^{E}P:=\bigcup_{s\in S}\mathcal{T}_{s}^{E}P,$ which is called the
prolongation of $E$ over $\pi .$ Equivalently, $\mathcal{T}^{E}P$ is called
the $E$--tangent bundle to $\pi ,$ which is also a vector bundle over $P$,
with projection $\tau _{P}^{E}$ just onto the first factor, $\tau
_{P}^{E}(b,v)=b.$ The elements of $\mathcal{T}^{E}P$ are written $(p,b,v).$
There are also used brief denotations $(p,b,v)\in \mathcal{T}%
_{p}^{E}P\rightarrow $ $(b,v)\in \mathcal{T}^{E}P$ if not ambiguities. The
anchor $\rho ^{\pi }:\mathcal{T}^{E}P\rightarrow \mathcal{T}P$ is given by
maps $\rho ^{\pi }$ $(p,b,v)=v,$ i.e. projection onto the third factor.

It is possible to define also the projection onto the second factor (i.e. a
morphism of Lie algebroids over $\pi ),$ $\mathcal{T}\pi :\mathcal{T}%
^{E}P\rightarrow E,$ when $\mathcal{T}\pi (p,b,v)=b.$ For instance, an
element $(p_{1},b_{1},v_{1})\in $ $\mathcal{T}^{E}P$ is vertical if $%
\mathcal{T}\pi $ $(p_{1},b_{1},v_{1})=b_{1}=0,$ i.e. such elements are of
type $(p,0,v)$ when $v$ is a $\pi $--vertical vector (tangent to $P$ at
point $p).$

Locally, any element $\overline{z}=(p,b,v)\in $ $\mathcal{T}^{E}P,$ when $%
b=z^{\alpha }e_{\alpha }$ and $v=\rho _{\alpha }^{i}z^{\alpha }\partial
_{i}+v^{A}\partial _{A},$ for $\partial /\partial u^{A},$ can be decomposed $%
\overline{z}=z^{\alpha }\mathcal{X}_{\alpha }+v^{A}\mathcal{V}_{A}$, where $%
\left( \mathcal{X}_{\alpha },\mathcal{V}_{A}\right) ,$ with vertical $%
\mathcal{V}_{A},$ define a local basis of sections of $\mathcal{T}^{E}P.$ In
explicit form, such bases can be parametrized in the form $\mathcal{X}%
_{\alpha }=\mathcal{X}_{\alpha }(p)=\left( e_{\alpha }(\pi (p)),\rho
_{\alpha }^{i}\partial _{i\mid p}\right)$ and $\mathcal{V}_{A}=\left(
0,\partial _{A\mid p}\right)$ where partial derivatives are taken in a point
$p\in S_{x}.$

The Lie algebroid structure of $\mathcal{T}^{E}P$ is stated by the anchor
map $\rho ^{\pi }(Z)=\rho _{\alpha }^{i}Z^{\alpha }\partial
_{i}+V^{A}\partial _{A}$ acting on sections $Z$ with associated
decompositions of type $\overline{z}$ and by the Lie brackets $\left\lfloor
\mathcal{X}_{\alpha },\mathcal{X}_{\beta }\right\rfloor ^{\pi }=C_{\alpha
\beta }^{\gamma }\mathcal{X}_{\gamma },~\left\lfloor \mathcal{X}_{\alpha },%
\mathcal{V}_{B}\right\rfloor ^{\pi }=0,~~\left\lfloor \mathcal{V}_{A},%
\mathcal{V}_{B}\right\rfloor ^{\pi }=0.$ Using dual bases $\left( \mathcal{X}%
^{\alpha },\mathcal{V}^{B}\right) $, we can perform an exterior differential
calculus following formulas
\begin{equation}
dx^{i}=\rho _{\alpha }^{i}\mathcal{X}^{\alpha },\mbox{ for  }d\mathcal{X}%
^{\gamma }=-\frac{1}{2}C_{\alpha \beta }^{\gamma }\mathcal{X}^{\alpha
}\wedge \mathcal{X}^{\beta },\mbox{ and \ }du^{A}=\mathcal{V}^{A},%
\mbox{ \
for \ }d\mathcal{V}^{A}=0.  \label{extercalc}
\end{equation}%
For instance, if we take a (function, or Lagrangian) $L(x^{i},u^{\alpha })$
on $E$ we can compute%
\begin{equation*}
d^{E}L=\rho _{\alpha }^{i}(\partial _{i}L)\mathcal{X}^{\alpha }+(\partial
_{\alpha }L)\mathcal{V}^{\alpha },
\end{equation*}%
where $d^{E}x^{i}=\rho _{\alpha }^{i}\mathcal{X}^{\alpha }$ and $%
d^{E}u^{\alpha }=\mathcal{V}^{\alpha }.$ We shall write the absolute
differential, for instance, of $L,$ in the form $d^{E}L$ for $%
dx^{i}\rightarrow d^{E}x^{i}$ and $du^{\alpha }\rightarrow d^{E}u^{\alpha }$
if $P=E$ and $\mathcal{V}^{A}\rightarrow \mathcal{V}^{\alpha }.$

Let us consider $P=E$ for $\mathcal{T}^{E}P$ when the prolongation Lie
algebroid is for a bundle projection $\tau :E\rightarrow M$. We can formulate
a mechanical model for a Lagrangian function $L\in C^{\infty }(E)$ and chose
a vertical endomorphism $S:\mathcal{T}^{E}E\rightarrow \mathcal{T}^{E}E$ of
type $S(a,b,v)=\xi ^{V}(a,b)=(a,0,b_{a}^{V}),$ where $b_{a}^{V}$ is the
vector tangent to the curve $a+\tau b$ when the parameter $\tau =0.$ The
vertical lift $\xi ^{V}$ allows us to define a map $\xi ^{V}:\tau ^{\ast
}E\rightarrow \mathcal{T}^{E}E$ and the Liouville dilaton vector field $%
\bigtriangleup (a)=\xi ^{V}(a,a)=(a,0,b_{a}^{V}).$

A model of Lie algebroid mechanics for a Lagrangian $L$ can be geometrized
on $\mathcal{T}^{E}E$ in terms of three geometric objects,%
\begin{eqnarray}
\mbox{ the Cartan 1-section:  }\theta _{L}:= &&S^{\ast }(dL)\in Sec((%
\mathcal{T}^{E}E)^{\ast });  \label{cartvar} \\
\mbox{ the Cartan 2-section:  }\omega _{L}&:= &-d\theta _{L}\in Sec(\wedge
^{2}(\mathcal{T}^{E}E)^{\ast });  \notag \\
\mbox{ the Lagrangian energy : } E_{L}&:= &\mathcal{L}_{\bigtriangleup}L
-L\in C^{\infty }(E),  \notag
\end{eqnarray}%
where the Lie derivative (\ref{lderiv}) is considered in the last formula.
Using these variables, the dynamical equations \ derived for $L$ can be
geometrized as
\begin{equation}
i_{SX}\omega _{L}=-S^{\ast }(i_{X}\omega _{L})\mbox{ \ and \ }%
i_{\bigtriangleup }\omega _{L}=-S^{\ast }(dE_{L}),\forall X\in Sec(\mathcal{T%
}^{E}E).  \label{geomeq}
\end{equation}%
Such geometric equations define equivalently a regular Lagrange mechanics if
$\omega _{L}$ is regular at every point as a bi--linear form, i.e. it is a
symplectic section. For configurations with regular $L$, and $\omega _{L},$
there exists a unique solution $\Gamma _{L}$ and a form $\Omega _{L}$
satisfying the condition $i_{\Gamma _{L}}\Omega _{L}=dE_{L}$. From equations
(\ref{geomeq}), we obtain $i_{S\Gamma _{L}}\omega _{L}=i_{\bigtriangleup
}\omega _{L}.$ This states that $S(\Gamma _{L})=\bigtriangleup $
(equivalently, $\mathcal{T}\tau (\Gamma _{L}(a))=a,\forall a\in E)$ which
constraints $\Gamma _{L}$ to be a SODE (second order differential equation)
section, or semispray. Taking $\Omega _{L}=\omega _{L},$ for $P=E,$ we can
write the last equation as a symplectic equation%
\begin{equation}
i_{\Gamma _{L}}\omega _{L}=d^{E}E_{L},  \label{geomeq1}
\end{equation}%
for $\Gamma _{L}\in Sec(\mathcal{T}^{E}E).$

The above geometric objects (\ref{cartvar}) and equations (\ref{geomeq}) can
be written in coefficient forms. Introducing local coordinates $%
(x^{i},y^{\alpha })\in E,$ for Lie algebroid structure functions $(\rho
_{\alpha }^{i},C_{\alpha \beta }^{\gamma }),$ and choosing a basis $\{%
\mathcal{X}_{\alpha },\mathcal{V}_{\alpha }\}\in Sec(\mathcal{T}^{E}E),$ for all $\alpha ,$ we  have
\begin{eqnarray}
&&S\mathcal{X}_{\alpha }=\mathcal{V}_{\alpha },~S\mathcal{V}_{\alpha
}=0,~\bigtriangleup =y^{\alpha }\mathcal{V}_{\alpha },~E_{L}=y^{\alpha
}\partial L/\partial y^{\alpha }-L,  \label{form1} \\
&&\omega _{L}=\frac{\partial ^{2}L}{\partial y^{\alpha }\partial y^{\beta }}%
\mathcal{X}^{\alpha }\wedge \mathcal{V}^{\beta }+\frac{1}{2}(\rho _{\beta
}^{i}\frac{\partial ^{2}L}{\partial x^{i}\partial y^{\alpha }}-\rho _{\alpha
}^{i}\frac{\partial ^{2}L}{\partial x^{i}\partial y^{\beta }}+C_{\alpha
\beta }^{\gamma }\frac{\partial L}{\partial y^{\gamma }})\mathcal{X}^{\alpha
}\wedge \mathcal{X}^{\beta }.  \notag
\end{eqnarray}%
As a vertical endomorphism (equivalently, tangent structure) can be used the operator $S:=\mathcal{X}^{\alpha }\otimes \mathcal{V}_{\alpha }.$

The Euler--Lagrange section associated with $L$ is given by $\Gamma
_{L}=y^{\alpha }\mathcal{X}_{\alpha }+\varphi ^{\alpha }\mathcal{V}_{\alpha
},$ when functions $\varphi ^{\alpha }(x^{i},y^{\beta })$ solve this system
of linear equations%
\begin{equation*}
\varphi ^{\beta }\frac{\partial ^{2}L}{\partial y^{\beta }\partial y^{\alpha
}}+y^{\beta }\left( \rho _{\beta }^{i}\frac{\partial ^{2}L}{\partial
x^{i}\partial y^{\alpha }}+C_{\alpha \beta }^{\gamma }\frac{\partial L}{%
\partial y^{\gamma }}\right) -\rho _{\alpha }^{i}\frac{\partial L}{\partial
x^{i}}=0.
\end{equation*}%
The condition of regularity is equivalent to non--degeneration of the
Hessian
\begin{equation}
\varpi _{\alpha \beta }:=\frac{\partial ^{2}L}{\partial y^{\alpha }\partial
y^{\beta }},~|\varpi _{\alpha \beta }|=\det |\varpi _{\alpha \beta }|\neq 0.
\label{hessian}
\end{equation}%
For regular configurations, we can express the semi--spray vector as
\begin{equation}
\varphi ^{\varepsilon }=\varpi ^{\varepsilon \beta }(\rho _{\beta }^{i}\frac{%
\partial L}{\partial x^{i}}-\rho _{\alpha }^{i}\frac{\partial ^{2}L}{%
\partial x^{i}\partial y^{\beta }}y^{\alpha }-C_{\beta \alpha }^{\gamma }%
\frac{\partial L}{\partial y^{\gamma }}y^{\alpha }),  \label{semispray}
\end{equation}%
where $\varpi ^{\alpha \beta }$ is inverse to $\varpi _{\alpha \beta }.$ If
the condition $\left[ \bigtriangleup ,\Gamma _{L}\right] _{E}= \Gamma _{L}$
is satisfied, the section $\Gamma _{L}$ transforms into a spray which states
that the functions $\varphi ^{\beta }$ are homogenous of degree $2$ on $%
y^{\beta }.$ A curve $c(\tau )=(x^{i}(\tau ),y^{\alpha }(\tau ))\in E$ for a
real parameter $\tau $ defines a solution of the Euler--Lagrange equations
for $L$ if
\begin{equation}
\frac{dx^{i}}{d\tau }=\rho _{\alpha }^{i}y^{\alpha }\mbox{ \ and \ }\frac{d}{%
d\tau }\left( \frac{\partial L}{\partial y^{\alpha }}\right) +y^{\beta
}C_{\alpha \beta }^{\gamma }\frac{\partial L}{\partial y^{\gamma }}-\rho
_{\alpha }^{i}\frac{\partial L}{\partial x^{i}}=0.  \label{eleq}
\end{equation}

Similarly to the model of Lagrange mechanics on Lie algebroids defined by
equations (\ref{geomeq}) and (\ref{eleq}), it is possible to elaborate
Hamilton/symplectic geomerizations, see details in \cite{dleon1,cortes}.
However, in both cases, it is not clear how some versions of Perelman
functionals for geometric flows should be performed if we restrict our
constructions only to Cartan's symplectic forms and Lagrangian energy (\ref%
{cartvar}) and related equations (\ref{geomeq}).

\section{Lagrangians on Lie Algebroids \& N--Connections}

\label{s3} In order to elaborate Lagrange--Ricci evolution models on $TM$
and nonholonomic manifolds, we used \cite{vricci1,vricci2} a geometrization
of mechanics in terms of canonical nonlinear and linear connections defined
by a regular Lagrangian $L.$ This section is devoted to a brief introduction
into the geometry of nonlinear connections on Lie algebroids, see former
constructions \cite{kern,valg1,valg2}.

\subsection{N--connections and prolongations of Lie algebroids}

A nonlinear connection, N--connection, structure for a vector bundle $P$ %
\cite{ehresmann} can be defined as a Whitney sum $\mathbf{N}:TP=hTP\oplus
vTP $. A couple $\mathbf{P}:=(P,\mathbf{N})$ is called a \textit{%
nonholonomic vector bundle} (equivalently, vector N--bundle, with
conventional horizontal, h, and vertical, v, splitting/decomposition).%
\footnote{%
As some particular cases, we can take $P=TM$, for a tangent bundle, or to consider $P=V$, for as a (semi)\ Riemannian nonholonomic manifold with non--integrable $h$--$v$--splitting as we consider in our works on classical and quantum gravity \cite{vncricci,vrflg,vqgr2}), when ''boldface'' letters are used for spaces and/or geometric objects defined/adapted to spaces with N--connection structure.}

N--connections can be similarly introduced on prolongation Lie algebroids via a corresponding $h$--$v$--splitting,
\begin{equation}
\mathcal{N}:\mathcal{T}^{E}\mathbf{P}=h\mathcal{T}^{E}P\oplus v\mathcal{T}%
^{E}P.  \label{nonalg}
\end{equation}%
Such a bundle, and Lie algebroid, morphism $\mathcal{N}:\mathcal{T}^{E}%
\mathbf{P\rightarrow }\mathcal{T}^{E}\mathbf{P,}$ with $\mathcal{N}^{2}=id,$
defines an almost product structure on $\ ^{P}\pi:\ TP\rightarrow P$ for a
smooth map on $TP\backslash \{0\},$ were $\{0\}$ denotes the set of null
sections. A N--connection induces $h$- and $v$--projectors for every element
$\overline{z}=(p,b,v)\in $ $\mathcal{T}^{E}P,$ when $h(\overline{z})=~^{h}z$
and $v(z)=~^{v}z,$ for $h=\frac{1}{2}(id+\mathcal{N})$ and $v=\frac{1}{2}(id-%
\mathcal{N}).$ These operators define, respectively, the $h$- and $v$%
--subspaces, $h\mathcal{T}^{E}P=\ker (id-\mathcal{N})$ and $v\mathcal{T}%
^{E}P=\ker (id+\mathcal{N}).$

\begin{definition}
A Lie distinguished algebroid (d--algebroid) $\widetilde{\mathcal{E}}=(%
\mathbf{E},\left\lfloor \cdot,\cdot\right\rfloor,\rho),$ is defined for a
nonholonomic vector bundle $\mathbf{E}$ endowed with N--connection structure
$\mathbf{N}$. The prolongations of a Lie algebroid $\mathcal{E}$ over a
nonholonomic bundle $\mathbf{P}:=(P,\mathbf{N})$ is also a Lie d--algebroid.%
\footnote{%
In our former works, see \cite{valg1} and references therein, we wrote for a
Lie d--algebroid $~^{N}\mathcal{E}=(E,\mathbf{N},\left\lfloor \cdot ,\cdot
\right\rfloor ,\rho ),$ when left low/up indices are used as abstract labels
for some geometric objects and spaces.}
\end{definition}

Locally, N--connections are determined respectively by their coefficients $%
\mathbf{N=\{}N_{\alpha }^{A}\}$ and $\mathcal{N}\mathbf{=\{}\mathcal{N}%
_{\alpha }^{A}\},$ when
\begin{equation}
\mathbf{N=}N_{i}^{A}(x^{k},u^{B})dx^{i}\otimes \partial _{A}\mbox{ \ and \ }%
\mathcal{N}=\mathcal{N}_{\alpha }^{A}\mathcal{X}^{\alpha }\otimes \mathcal{V}%
_{A}.  \label{nonlalg}
\end{equation}%
Such structures on $TP$ and $\mathcal{T}^{E}\mathbf{P}$ are compatible if $%
\mathcal{N}_{\alpha }^{A}=N_{i}^{A}\rho _{\alpha }^{i}.$ Using $\mathcal{N}%
_{\alpha }^{A},$ we can generate sections $\delta _{\alpha }:=\mathcal{X}%
_{\alpha }-\mathcal{N}_{\alpha }^{A}\mathcal{V}_{A}$ as a local basis of $h%
\mathcal{T}^{E}P.$ In general, this allows us to define a N--adapted frame structure%
\begin{equation}
~\mathbf{e}_{\overline{\alpha }}:=\{\delta _{\alpha }=\mathcal{X}_{\alpha }-%
\mathcal{N}_{\alpha }^{C}\mathcal{V}_{C},\mathcal{V}_{A}\},\mbox{ \ on \ }%
\mathcal{T}^{E}\mathbf{P},  \label{dderalg}
\end{equation}%
and its dual
\begin{equation}
\mathbf{e}^{\overline{\beta }}:=\{\mathcal{X}^{\alpha },\delta ^{B}=\mathcal{%
V}^{B}+\mathcal{N}_{\gamma }^{B}\mathcal{V}^{\gamma }\},  \label{ddifalg}
\end{equation}%
where the ''overlined'' small Greek indices split in the form $\overline{%
\alpha }=(\alpha ,A)$ if an arbitrary vector bundles $\mathbf{P}$ is
considered, or $\overline{\alpha }=(\alpha ,\alpha )$ if $\mathbf{P=E.}$ The
N--adapted bases (\ref{dderalg}) satisfy certain nonholonomy relations,
\begin{equation*}
~\mathbf{e}_{\overline{\alpha }}~\mathbf{e}_{\overline{\beta }}-~\mathbf{e}_{%
\overline{\beta }}~\mathbf{e}_{\overline{\alpha }}=W_{\overline{\alpha }%
\overline{\beta }}^{\overline{\gamma }}\mathbf{e}_{\overline{\gamma }},
\end{equation*}%
with nontrivial anholonomy coefficients $W_{\overline{\alpha }\overline{%
\beta }}^{\overline{\gamma }}=\{C_{\alpha \beta }^{\gamma },\Omega _{\alpha
\beta }^{C},\partial _{B}\mathcal{N}_{\alpha }^{C}\}.$ Such values are
determined both by $N_{\alpha }^{A}$ and Lie algebroid structure constants $%
C_{\alpha \beta }^{\gamma }.$ The corresponding generalized Lie brackets are
defined by relations
\begin{equation*}
\left\lfloor \delta _{\alpha },\delta _{\beta }\right\rfloor ^{\pi
}=C_{\alpha \beta }^{\gamma }\delta _{\gamma }+\Omega _{\alpha \beta }^{C}%
\mathcal{V}_{C},~\left\lfloor \delta _{\alpha },\mathcal{V}_{B}\right\rfloor
^{\pi }=(\partial _{B}\mathcal{N}_{\alpha }^{C})\mathcal{V}%
_{C},~~\left\lfloor \mathcal{V}_{A},\mathcal{V}_{B}\right\rfloor ^{\pi }=0.
\end{equation*}%
In these formulas, $\Omega _{\alpha \beta }^{C}=\delta _{\beta }\mathcal{N}%
_{\alpha }^{C}-\delta _{\alpha }\mathcal{N}_{\beta }^{C}+C_{\alpha \beta }^{\gamma }\mathcal{N}_{\gamma }^{C}$ are the N--adapted coefficients of the Neigenhuis tensor $~^{h}N$ of the operator $h,$ {\small
\begin{equation*}
\ ^{h}N(\cdot ,\cdot) =\left\lfloor h\cdot ,h\cdot \right\rfloor ^{\pi
}-h\left\lfloor h\cdot ,\cdot \right\rfloor ^{\pi }-h\left\lfloor \cdot
,h\cdot \right\rfloor ^{\pi }+h^{2}\left\lfloor h\cdot ,h\cdot \right\rfloor
^{\pi } =-\frac{1}{2}\Omega _{\alpha \beta }^{C}\mathcal{X}^{\alpha}
\wedge \mathcal{X}^{\beta }\otimes \mathcal{V}_{C},
\end{equation*}%
}
which (by definition) is considered to be the curvature of N--connection $%
\mathcal{N}_{\alpha }^{A}.$

It should be noted that for $\mathbf{P=E,}$ the above formulas for Lie
d--algebroid $\mathcal{T}^{E}\mathbf{E}$ mimic on sections of $E$ the
geometry of tangent bundles and/or nonholonomic manifolds of even dimension,
endowed with N--connection structure (on applications in modern classical
and quantum gravity, with various modifications, and nonholonomic Ricci flow
theory, see Refs. \cite{vrflg,vricci1,vncricci}). If $\mathbf{P\neq E,}$ we
model nonholonomic vector bundle and generalized Riemann geometries on
sections of $\mathcal{T}^{E}\mathbf{P.}$

\subsection{Linear connections and metrics on $\mathcal{T}^{E}\mathbf{P}$}

The Levi--Civita connection $~^{\varpi }\nabla $ (\ref{lccon}) on $\mathcal{E%
}$ is not adapted to a N--connection structure on $\mathcal{T}^{E}\mathbf{P}$%
. We have to introduce into consideration another classes of linear
connections which would involve the $h$-$v$--splitting for $\mathcal{T}^{E}%
\mathbf{P}$.

\begin{definition}
A distinguished connection, d--connection, $\mathcal{D}$ on $\mathcal{T}^{E}%
\mathbf{P}$ is a linear connection preserving under parallelism the
N--connection (\ref{nonalg}).
\end{definition}

The N--adapted components $\mathbf{\Gamma }_{\ \overline{\beta }\overline{%
\gamma }}^{\overline{\alpha }}=\left( \mathbf{L}_{\beta \gamma }^{\alpha },%
\mathbf{L}_{B\gamma \;}^{A};\mathbf{B}_{\beta C}^{\alpha },\mathbf{B}%
_{BC}^{A}\right) $ of a covariant operator $\mathcal{D}_{\overline{\alpha }%
}=(~\mathbf{e}_{\overline{\alpha }}\rfloor \mathcal{D}),$ where $\rfloor $
is the interior product, are computed following equations $\mathcal{D}_{%
\overline{\alpha }}\mathbf{e}\overline{_{\beta }}=\mathbf{\Gamma }_{\
\overline{\alpha }\overline{\beta }}^{\overline{\gamma }}\mathbf{e}_{%
\overline{\gamma }},$ or $\mathbf{\Gamma }_{\ \alpha \beta }^{\gamma
}=\left( \mathcal{D}_{\overline{\alpha }}\mathbf{e}_{\overline{\beta }%
}\right) \rfloor \mathbf{e}^{\overline{\gamma }}.$ The h-- and v--covariant
derivatives are respectively $h\mathcal{D}=\{\mathcal{D}_{\gamma }=\left(
\mathbf{L}_{\beta \gamma }^{\alpha },\mathbf{L}_{B\gamma \;}^{A}\right) \}$
and $\ v\mathcal{D}=\{\mathcal{D}_{C}=\left( \mathbf{B}_{\beta C}^{\alpha },%
\mathbf{B}_{BC}^{A}\right) \}$, where $\mathbf{L}_{\beta \gamma }^{\alpha
}:=(\mathcal{D}_{\gamma }\delta _{\beta })\rfloor \mathcal{X}^{\alpha },%
\mathbf{L}_{B\gamma }^{A}:=(\mathcal{D}_{\gamma }\mathcal{V}_{B})\rfloor
\delta ^{A},\mathbf{B}_{\beta C}^{\alpha }:=(\mathcal{D}_{C}\delta _{\beta
})\rfloor \mathcal{X}^{\alpha },\mathbf{B}_{BC}^{A}:=(\mathcal{D}_{C}%
\mathcal{V}_{B})\rfloor \delta ^{A}$ are computed for N--adapted bases (\ref%
{dderalg}) and (\ref{ddifalg}).

Using rules of absolute differentiation (\ref{extercalc}) for N--adapted
bases $~\mathbf{e}_{\overline{\alpha }}:=\{\delta _{\alpha },\mathcal{V}%
_{A}\}$ and $\mathbf{e}^{\overline{\beta }}:=\{\mathcal{X}^{\alpha },
\delta^{B}\}$ and the d--connection 1--form $\mathbf{\Gamma }_{\ \overline{%
\alpha }}^{\overline{\gamma }}:=\mathbf{\Gamma }_{\ \overline{\alpha }%
\overline{\beta }}^{\overline{\gamma }}\mathbf{e}^{\overline{\beta }},$ we
can compute the torsion and curvature 2--forms on $\mathcal{T}^{E}\mathbf{P}%
: $

Let us consider sections $\overline{x},\overline{y},\overline{z}$ \ of $%
\mathcal{T}^{E}\mathbf{P,}$ were (for instance) $\overline{z}=z^{\overline{%
\alpha }}\mathbf{e}_{\overline{\alpha }}=z^{\alpha }\delta _{\alpha }+z^{A}%
\mathcal{V}_{A}.$ The torsion of d--connection $\mathcal{D}$, $\mathcal{T}(%
\overline{x},\overline{y})=\mathcal{D}_{\overline{x}}\overline{y}- \mathcal{D%
}_{\overline{y}}\overline{x}+\left\lfloor \overline{x},\overline{y}%
\right\rfloor ^{\pi }$ considered as a 2--form is defined as $\mathcal{T}^{%
\overline{\alpha }}:=\mathcal{D}\mathbf{e}^{\overline{\alpha }}=d\mathbf{e}^{%
\overline{\alpha }}+\mathbf{\Gamma }_{\ \overline{\beta }}^{\overline{\alpha
}}\wedge \mathbf{e}^{\overline{\beta }}$.  Following a straightforward
N--adapted differential form calculus, we prove

\begin{theorem}
The $h$--$v$--coefficients of torsion, \newline
$\mathcal{T}^{\overline{\alpha }}=\{\mathbf{T}_{\ \overline{\beta }\overline{%
\gamma }}^{\overline{\alpha }}\}=\{\mathbf{T}_{\ \beta \gamma }^{\alpha },%
\mathbf{T}_{\ \beta A}^{\alpha },\mathbf{T}_{\ \beta \gamma }^{A},\mathbf{T}%
_{\ B\alpha }^{A},\mathbf{T}_{\ BC}^{A}\},$ are
\begin{eqnarray}
\mathbf{T}_{\ \beta \gamma }^{\alpha } &=&\mathbf{L}_{\ \beta \gamma
}^{\alpha }-\mathbf{L}_{\ \gamma \beta }^{\alpha }+C_{\ \beta \gamma
}^{\alpha },\ \mathbf{T}_{\ \beta A}^{\alpha }=-\mathbf{T}_{\ A\beta
}^{\alpha }=\mathbf{B}_{\ \beta A}^{\alpha },\ \mathbf{T}_{\ \beta \alpha
}^{A}=\Omega _{\ \beta \alpha }^{A},\   \notag \\
\mathbf{T}_{\ B\alpha }^{A} &=&\frac{\partial \mathcal{N}_{\alpha }^{A}}{%
\partial u^{B}}-\mathbf{L}_{\ B\alpha }^{A},\ \mathbf{T}_{\ BC}^{A}=\mathbf{B%
}_{\ BC}^{A}-\mathbf{B}_{\ CB}^{A}.  \label{dtors}
\end{eqnarray}
\end{theorem}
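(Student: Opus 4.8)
The plan is to prove the formulas by a direct N--adapted Cartan calculus: evaluate the torsion $2$--form $\mathcal{T}^{\overline{\alpha}}=d\mathbf{e}^{\overline{\alpha}}+\mathbf{\Gamma}_{\ \overline{\beta}}^{\overline{\alpha}}\wedge \mathbf{e}^{\overline{\beta}}$ in the frame (\ref{dderalg})--(\ref{ddifalg}) and then read off the coefficients by matching the monomials $\mathcal{X}^{\beta}\wedge \mathcal{X}^{\gamma}$, $\mathcal{X}^{\beta}\wedge \delta^{A}$ and $\delta^{B}\wedge \delta^{C}$. I would organize the work by the upper index. Because $\mathcal{D}$ preserves the $h$--$v$--splitting of (\ref{nonalg}), the connection $1$--form with upper $h$--index has only an $h$ first lower index, $\mathbf{\Gamma}_{\ \beta}^{\alpha}=\mathbf{L}_{\beta\gamma}^{\alpha}\mathcal{X}^{\gamma}+\mathbf{B}_{\beta C}^{\alpha}\delta^{C}$, and likewise $\mathbf{\Gamma}_{\ B}^{A}=\mathbf{L}_{B\gamma}^{A}\mathcal{X}^{\gamma}+\mathbf{B}_{BC}^{A}\delta^{C}$, while the off--diagonal blocks $\mathbf{\Gamma}_{\ B}^{\alpha}$ and $\mathbf{\Gamma}_{\ \beta}^{A}$ vanish. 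This immediately restricts which wedge monomials each connection term can produce and separates the five coefficients.

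For the differentials I would use the prolongation exterior calculus (\ref{extercalc}). The horizontal part is immediate: $d\mathcal{X}^{\alpha}=-\frac{1}{2}C_{\beta\gamma}^{\alpha}\mathcal{X}^{\beta}\wedge \mathcal{X}^{\gamma}$ supplies the $C_{\beta\gamma}^{\alpha}$ term of $\mathbf{T}_{\ \beta\gamma}^{\alpha}$, the antisymmetrized $\mathbf{L}$--terms come from $\mathbf{\Gamma}_{\ \beta}^{\alpha}\wedge \mathcal{X}^{\beta}$, and the contraction of $\mathbf{B}_{\beta C}^{\alpha}\delta^{C}\wedge \mathcal{X}^{\beta}$ with the mixed monomial produces $\mathbf{T}_{\ \beta A}^{\alpha}=\mathbf{B}_{\beta A}^{\alpha}$. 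The genuine computation is $d\delta^{A}$. Writing the coframe element from (\ref{ddifalg}) as $\delta^{A}=\mathcal{V}^{A}+\mathcal{N}_{\gamma}^{A}\mathcal{X}^{\gamma}$ and using $d\mathcal{V}^{A}=0$ together with $d\mathcal{X}^{\gamma}$, I would first re--expand $d\mathcal{N}_{\gamma}^{A}$ into the N--adapted coframe; the cross terms conveniently cancel, leaving $d\mathcal{N}_{\gamma}^{A}=(\delta_{\beta}\mathcal{N}_{\gamma}^{A})\mathcal{X}^{\beta}+(\partial_{B}\mathcal{N}_{\gamma}^{A})\delta^{B}$. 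Substituting and antisymmetrizing, the $\mathcal{X}^{\beta}\wedge \mathcal{X}^{\gamma}$ part assembles precisely into $\Omega_{\ \beta\gamma}^{A}=\delta_{\gamma}\mathcal{N}_{\beta}^{A}-\delta_{\beta}\mathcal{N}_{\gamma}^{A}+C_{\beta\gamma}^{\mu}\mathcal{N}_{\mu}^{A}$, giving $\mathbf{T}_{\ \beta\gamma}^{A}=\Omega_{\ \beta\gamma}^{A}$ (no connection term can contribute here, since $\mathbf{\Gamma}_{\ B}^{A}\wedge \delta^{B}$ carries no $\mathcal{X}\wedge \mathcal{X}$ part). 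The mixed monomial $\delta^{B}\wedge \mathcal{X}^{\alpha}$ collects $\partial_{B}\mathcal{N}_{\alpha}^{A}$ from $d\delta^{A}$ and $-\mathbf{L}_{B\alpha}^{A}$ from the connection, yielding $\mathbf{T}_{\ B\alpha}^{A}=\partial \mathcal{N}_{\alpha}^{A}/\partial u^{B}-\mathbf{L}_{B\alpha}^{A}$, and the purely vertical $\delta^{B}\wedge \delta^{C}$ monomial, receiving only $\mathbf{B}_{BC}^{A}\delta^{C}\wedge \delta^{B}$, gives $\mathbf{T}_{\ BC}^{A}=\mathbf{B}_{BC}^{A}-\mathbf{B}_{CB}^{A}$.

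I expect the main obstacle to be organizational rather than conceptual: the only delicate point is the passage from the coordinate/non--adapted coframe to the N--adapted one in computing $d\delta^{A}$, where the vertical derivatives $\partial_{B}\mathcal{N}_{\alpha}^{A}$ and the N--connection curvature $\Omega_{\ \beta\gamma}^{A}$ are simultaneously produced, and one must avoid double counting the structure functions $C_{\beta\gamma}^{\mu}$ that appear both in $d\mathcal{X}^{\gamma}$ and inside $\Omega_{\ \beta\gamma}^{A}$. The global signs and the placement of the factors $\frac{1}{2}$ in the five coefficients are fixed once the expansion convention $\mathcal{T}^{\overline{\alpha}}=\frac{1}{2}\mathbf{T}_{\ \overline{\beta}\overline{\gamma}}^{\overline{\alpha}}\mathbf{e}^{\overline{\beta}}\wedge \mathbf{e}^{\overline{\gamma}}$ and the sign of the structure equation are pinned down. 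As an independent verification I would recompute the same five coefficients from the invariant definition $\mathcal{T}(\overline{x},\overline{y})=\mathcal{D}_{\overline{x}}\overline{y}-\mathcal{D}_{\overline{y}}\overline{x}+\left\lfloor \overline{x},\overline{y}\right\rfloor^{\pi}$ evaluated on the N--adapted pairs $(\delta_{\beta},\delta_{\gamma})$, $(\delta_{\beta},\mathcal{V}_{A})$ and $(\mathcal{V}_{B},\mathcal{V}_{C})$, inserting the generalized Lie brackets $\left\lfloor \delta_{\alpha},\delta_{\beta}\right\rfloor^{\pi}=C_{\alpha\beta}^{\gamma}\delta_{\gamma}+\Omega_{\alpha\beta}^{C}\mathcal{V}_{C}$, $\left\lfloor \delta_{\alpha},\mathcal{V}_{B}\right\rfloor^{\pi}=(\partial_{B}\mathcal{N}_{\alpha}^{C})\mathcal{V}_{C}$ and $\left\lfloor \mathcal{V}_{A},\mathcal{V}_{B}\right\rfloor^{\pi}=0$; the two routes must agree term by term.
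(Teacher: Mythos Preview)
Your proposal is correct and follows precisely the route the paper indicates: the paper offers no detailed argument beyond the sentence ``Following a straightforward N--adapted differential form calculus, we prove,'' and your Cartan--form computation of $\mathcal{T}^{\overline{\alpha}}=d\mathbf{e}^{\overline{\alpha}}+\mathbf{\Gamma}_{\ \overline{\beta}}^{\overline{\alpha}}\wedge \mathbf{e}^{\overline{\beta}}$ in the frame (\ref{dderalg})--(\ref{ddifalg}) is exactly that calculus spelled out. Your cross--check via the invariant definition $\mathcal{T}(\overline{x},\overline{y})=\mathcal{D}_{\overline{x}}\overline{y}-\mathcal{D}_{\overline{y}}\overline{x}+\left\lfloor \overline{x},\overline{y}\right\rfloor^{\pi}$ on the N--adapted pairs is a sensible safeguard for the sign and ordering conventions, which (as you anticipate) are the only place one can slip.
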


The curvature of $\mathcal{D}$, $\mathcal{R}(\overline{x},\overline{y})%
\overline{z}:=\left( \mathcal{D}_{\overline{x}}~\mathcal{D}_{\overline{y}}-~%
\mathcal{D}_{\overline{y}}\mathcal{D}_{\overline{x}}-\mathcal{D}%
_{\left\lfloor \overline{x},\overline{y}\right\rfloor ^{\pi }}\right)
\overline{z}$, also can be considered/ computed as a 2--form,
\begin{eqnarray}
\mathcal{R}_{~\overline{\beta }}^{\overline{\alpha }}&:=&\mathcal{D}\mathbf{%
\Gamma }_{\ \overline{\beta }}^{\overline{\alpha }}=d\mathbf{\Gamma }_{\
\overline{\beta }}^{\overline{\alpha }}-\mathbf{\Gamma }_{\ \overline{\beta }%
}^{\overline{\gamma }}\wedge \mathbf{\Gamma }_{\ \overline{\gamma }}^{%
\overline{\alpha }}=\mathbf{R}_{\ \overline{\beta }\overline{\gamma }%
\overline{\delta }}^{\overline{\alpha }}\mathbf{e}^{\overline{\gamma }%
}\wedge \mathbf{e}^{\overline{\delta }},  \label{curv1a} \\
\mbox{ where } \mathbf{R}_{\ ~\overline{\beta }\overline{\gamma }\overline{%
\delta }}^{\overline{\alpha }}&=&\mathbf{e}_{\overline{\delta }}\mathbf{%
\Gamma }_{\ \overline{\beta }\overline{\gamma }}^{\overline{\alpha }}-%
\mathbf{e}_{\overline{\gamma }}\ \mathbf{\Gamma }_{\ \overline{\beta }%
\overline{\delta }}^{\overline{\alpha }}+\mathbf{\Gamma }_{\ \overline{\beta
}\overline{\gamma }}^{\overline{\varphi }}\ \mathbf{\Gamma }_{\ \overline{%
\varphi }\overline{\delta }}^{\overline{\alpha }}-\mathbf{\Gamma }_{\
\overline{\beta }\overline{\delta }}^{\overline{\varphi }}\ \mathbf{\Gamma }%
_{\ \overline{\varphi }\gamma }^{\overline{\alpha }}+\mathbf{\Gamma }_{\
\overline{\beta }\overline{\varphi }}^{\overline{\alpha }}W_{\overline{%
\gamma }\overline{\delta }}^{\overline{\varphi }}.  \notag
\end{eqnarray}
This results in a proof of
\begin{theorem}
\label{thasa1}The curvature of d--connection of $\mathcal{D}$, \newline
$\mathcal{R}_{~\overline{\beta }}^{\overline{\alpha }}=\{\mathbf{R}_{\
\overline{\beta }\overline{\gamma }\overline{\delta }}^{\overline{\alpha }%
}\}=\{\mathbf{R}_{\ \varepsilon \beta \gamma }^{\alpha }\mathbf{,R}_{\
B\beta \gamma }^{A}\mathbf{,R}_{\ \varepsilon \beta A}^{\alpha }\mathbf{,R}%
_{\ B\beta A}^{C}\mathbf{,R}_{\ \beta BA}^{\alpha },\mathbf{R}_{\ BEA}^{C}\}$
is characterized by N--adapted coefficients {\small
\begin{eqnarray}
\mathbf{R}_{\ \varepsilon \beta \gamma }^{\alpha } &=&\delta _{\gamma }%
\mathbf{L}_{\ \varepsilon \beta }^{\alpha }-\delta _{\beta }\mathbf{L}_{\
\varepsilon \gamma }^{\alpha }+\mathbf{L}_{\ \varepsilon \beta }^{\mu }%
\mathbf{L}_{\ \mu \gamma }^{\alpha }-\mathbf{L}_{\ \varepsilon \gamma }^{\mu
}\mathbf{L}_{\ \mu \beta }^{\alpha }+\mathbf{L}_{\ \varepsilon \varphi
}^{\alpha }C_{\beta \gamma }^{\varphi }-\mathbf{B}_{\ \varepsilon A}^{\alpha
}\Omega _{\ \gamma \beta }^{A},  \notag \\
\mathbf{R}_{\ B\beta \gamma }^{A} &=&\delta _{\gamma }\mathbf{L}_{\ B\beta
}^{A}-\delta _{\beta }\mathbf{L}^{A}_{B\gamma}+ \mathbf{L}^{C}_{B\beta}%
\mathbf{L}_{\ C\gamma }^{A}-\mathbf{L}^{C}_{B\gamma}\mathbf{L}_{\ C\beta
}^{A}+\mathbf{L}^{A}_{B\varphi }C_{\beta \gamma }^{\varphi }-\mathbf{B}_{\
BC}^{A}\Omega _{\gamma \beta }^{C},  \notag \\
\mathbf{R}_{\ \varepsilon \beta A}^{\alpha } &=&\mathcal{V}_{A}\mathbf{L}_{\
\varepsilon \beta }^{\alpha }-\mathcal{D}_{\beta }\mathbf{B}_{\varepsilon
A}^{\alpha }+\mathbf{B}_{\ \varepsilon B}^{\alpha }\mathbf{T}_{\ \beta
A}^{B},  \label{dcurv} \\
\mathbf{R}_{\ B\gamma A}^{C} &=&\mathcal{V}_{A}\mathbf{L}_{\ bk}^{c}-%
\mathcal{D}_{\gamma }\mathbf{B}_{\ BA}^{C}+\mathbf{B}_{\ BD}^{C}\mathbf{T}%
_{\ \gamma A}^{D},  \notag \\
\mathbf{R}_{\ \beta BA}^{\alpha } &=&\mathcal{V}_{A}\mathbf{B}_{\ \beta
B}^{\alpha }-\mathcal{V}_{B}\mathbf{B}_{\beta C}^{\alpha }+\mathbf{B}_{\
\beta B}^{\tau }\mathbf{B}_{\ \tau C}^{\alpha }-\mathbf{B}_{\ \beta C}^{\tau
}\mathbf{B}_{\ \tau B}^{\alpha },  \notag \\
\mathbf{R}_{\ ECB}^{A} &=&\mathcal{V}_{E}\mathbf{B}_{\ BC}^{A}-\mathcal{V}%
_{C}\mathbf{B}_{\ BE}^{A}+\mathbf{B}_{\ BC}^{F}\mathbf{B}_{\ FE}^{A}-\mathbf{%
B}_{\ BE}^{F}\mathbf{B}_{\ FC}^{A}.  \notag
\end{eqnarray}%
}
\end{theorem}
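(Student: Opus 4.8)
The plan is to obtain the six listed $h$--$v$--coefficients purely by specializing the frame--independent formula for $\mathbf{R}_{\ ~\overline{\beta}\overline{\gamma}\overline{\delta}}^{\overline{\alpha}}$ recorded in (\ref{curv1a}) to the split index ranges $\overline{\alpha}=(\alpha,A)$, etc. That master formula is itself obtained by expanding the defining relation $\mathcal{R}_{~\overline{\beta}}^{\overline{\alpha}}=d\mathbf{\Gamma}_{\ \overline{\beta}}^{\overline{\alpha}}-\mathbf{\Gamma}_{\ \overline{\beta}}^{\overline{\gamma}}\wedge\mathbf{\Gamma}_{\ \overline{\gamma}}^{\overline{\alpha}}$ in the N--adapted dual basis $\mathbf{e}^{\overline{\beta}}=\{\mathcal{X}^\alpha,\delta^B\}$, so the first step is to apply the exterior differential rules (\ref{extercalc}) to the basis 2--forms and to collect the extra terms generated by the nonzero anholonomy coefficients $W_{\overline{\gamma}\overline{\delta}}^{\overline{\varphi}}=\{C_{\alpha\beta}^\gamma,\Omega_{\alpha\beta}^C,\partial_B\mathcal{N}_\alpha^C\}$. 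These extra terms are exactly what produces the last summand $\mathbf{\Gamma}_{\ \overline{\beta}\overline{\varphi}}^{\overline{\alpha}}W_{\overline{\gamma}\overline{\delta}}^{\overline{\varphi}}$ in (\ref{curv1a}), which must be carried through honestly since the frame is nonholonomic.

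Next I would run through the admissible index types one at a time. For each choice of external indices I substitute the component decomposition $\mathbf{\Gamma}_{\ \overline{\beta}\overline{\gamma}}^{\overline{\alpha}}=(\mathbf{L}_{\beta\gamma}^\alpha,\mathbf{L}_{B\gamma}^A;\mathbf{B}_{\beta C}^\alpha,\mathbf{B}_{BC}^A)$, replace each frame derivative $\mathbf{e}_{\overline{\delta}}$ by either $\delta_\gamma$ or $\mathcal{V}_A$ according to its $h$/$v$ character, and expand the quadratic connection terms keeping only the surviving blocks. The decisive bookkeeping is the contracted term $\mathbf{\Gamma}_{\ \overline{\beta}\overline{\varphi}}^{\overline{\alpha}}W_{\overline{\gamma}\overline{\delta}}^{\overline{\varphi}}$: the summation index $\overline{\varphi}$ ranges over both $h$-- and $v$--values, so for purely horizontal external indices one must retain not only the $C_{\beta\gamma}^\varphi$ contribution (giving $\mathbf{L}_{\ \varepsilon\varphi}^\alpha C_{\beta\gamma}^\varphi$) but also the mixed piece with $\overline{\varphi}$ vertical, whose weight is the N--connection curvature $\Omega_{\ \gamma\beta}^A$. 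This is precisely the origin of the $-\mathbf{B}_{\ \varepsilon A}^\alpha\Omega_{\ \gamma\beta}^A$ term in the first line of (\ref{dcurv}) and, analogously, of the $-\mathbf{B}_{\ BC}^A\Omega_{\gamma\beta}^C$ term in the second line.

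For the mixed curvatures $\mathbf{R}_{\ \varepsilon\beta A}^\alpha$ and $\mathbf{R}_{\ B\gamma A}^C$, one of the two differentiation directions is vertical, so $\mathbf{e}_{\overline{\delta}}=\mathcal{V}_A$, and a $v$--covariant derivative $\mathcal{D}_\beta\mathbf{B}$ should be assembled from the bare $\delta_\beta\mathbf{B}$ term together with the appropriate connection corrections; here the bracket coefficient $\partial_B\mathcal{N}_\alpha^C$ recombines with $\partial\mathcal{N}/\partial u^B-\mathbf{L}_{\ B\alpha}^A$ to reassemble the torsion factor recorded in (\ref{dtors}), so that (\ref{dcurv}) can be written compactly in terms of $\mathbf{T}_{\ \beta A}^B$. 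The remaining purely vertical blocks $\mathbf{R}_{\ \beta BA}^\alpha$ and $\mathbf{R}_{\ ECB}^A$ are the cleanest: both differentiation directions are vertical, the bracket $\left\lfloor\mathcal{V}_A,\mathcal{V}_B\right\rfloor^\pi=0$ annihilates any $W$--contribution, and one is left with the naive $\mathcal{V}\mathbf{B}-\mathcal{V}\mathbf{B}+\mathbf{B}\mathbf{B}-\mathbf{B}\mathbf{B}$ pattern.

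I expect the main obstacle to be the disciplined control of the split index $\overline{\varphi}$ in the quadratic and $W$--terms, where it is easy to overlook exactly those mixed $h$--$v$ contributions (the $\Omega$-- and $\partial_B\mathcal{N}$--pieces) that distinguish the d--connection curvature on $\mathcal{T}^E\mathbf{P}$ from a naive anholonomic--frame computation. Organizing the work so that each block of (\ref{dcurv}) is read off from a single specialization of (\ref{curv1a}), with the N--connection data (\ref{nonlalg}) and the generalized brackets inserted consistently, is what keeps the algebra finite and transparent; everything else is routine N--adapted differential--form calculus.
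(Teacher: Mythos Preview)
Your proposal is correct and follows exactly the route the paper indicates: the paper simply records the master coefficient formula (\ref{curv1a}) with the anholonomy term $\mathbf{\Gamma}_{\ \overline{\beta}\overline{\varphi}}^{\overline{\alpha}}W_{\overline{\gamma}\overline{\delta}}^{\overline{\varphi}}$ and then states that ``this results in a proof of'' the theorem, leaving the $h$--$v$ specialization implicit. Your outline in fact supplies considerably more detail than the paper does, correctly isolating where the $C_{\beta\gamma}^\varphi$, $\Omega_{\gamma\beta}^A$, and $\partial_B\mathcal{N}_\alpha^C$ pieces of $W$ feed into each block and how the mixed components repackage into the torsion $\mathbf{T}_{\ \beta A}^B$ from (\ref{dtors}).
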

We note that in above first two formulas  the terms $\mathbf{L}_{\ \varepsilon \varphi }^{\alpha
}C_{\beta \gamma }^{\varphi }$ and $\mathbf{L}_{\ B\varphi }^{A}C_{\beta
\gamma }^{\varphi },$ respectively, transform in
zero for a trivial Lie algebroid commutator structure when $C_{\beta
\gamma}^{\varphi }=0.$ In such a case, the geometry of $\mathcal{T}^{E}%
\mathbf{P}$ endowed with N--connection structure $\mathcal{N}_{\gamma }^{C}$
mimics a similar one for the associated vector bundle $\mathbf{P}$ with a
nontrivial $N_{i}^{\alpha }.$ Using prolongations of Lie algebroids on
fibration maps, we model tangent bundle geometries but not in a complete
equivalent form because there are differences in chosen nonholnomic
structures and torsions and curvatures of d--connections.

\begin{corollary}
\label{corolricci} The Ricci tensor of $\mathcal{D}$, $\mathcal{R}ic=\{%
\mathbf{R}_{\overline{\alpha }\overline{\beta }}:=\mathbf{R}_{\ \overline{%
\alpha }\overline{\beta }\overline{\gamma }}^{\overline{\gamma }}\}$, is
characterized by N--adapted coefficients{\small
\begin{equation}
\mathbf{R}_{\overline{\alpha }\overline{\beta }}=\{\mathbf{R}_{\alpha \beta
}:=\mathbf{R}_{\ \alpha \beta \chi }^{\gamma },\ \mathbf{R}_{\alpha A}:=-%
\mathbf{R}_{\ ~\alpha \gamma A}^{\gamma },\ \mathbf{R}_{A\alpha }:=\mathbf{R}%
_{\ ~A\alpha B}^{B},\ \mathbf{R}_{AB}:=\mathbf{R}_{\ ~ABC}^{C}\}.
\label{driccialg}
\end{equation}%
}
\end{corollary}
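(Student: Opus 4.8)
The plan is to derive (\ref{driccialg}) purely by tracing the d--curvature coefficients of Theorem \ref{thasa1}. By definition $\mathbf{R}_{\overline{\alpha}\overline{\beta}}:=\mathbf{R}_{\ \overline{\alpha}\overline{\beta}\overline{\gamma}}^{\overline{\gamma}}$, so the two free indices occupy the first two lower slots and the trace is taken over the last lower slot against the upper slot. Since each overlined index decomposes as $\overline{\gamma}=(\gamma,C)$, every such contraction splits as $\sum_{\overline{\gamma}}=\sum_{\gamma}+\sum_{C}$, an $h$--trace plus a $v$--trace. First I would enumerate the admissible types of the free pair $(\overline{\alpha},\overline{\beta})$, namely $(h,h)$, $(h,v)$, $(v,h)$, $(v,v)$; these are exactly the four slots $\mathbf{R}_{\alpha\beta}$, $\mathbf{R}_{\alpha A}$, $\mathbf{R}_{A\alpha}$, $\mathbf{R}_{AB}$ appearing in (\ref{driccialg}).

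The key structural input is that $\mathcal{D}$ is a d--connection, hence preserves the splitting (\ref{nonalg}) under parallelism; its curvature endomorphism is therefore block diagonal, $\mathcal{R}_{\ \beta}^{C}=0=\mathcal{R}_{\ B}^{\gamma}$, which is already visible in Theorem \ref{thasa1}, where every listed family $\{\mathbf{R}_{\ \varepsilon\beta\gamma}^{\alpha},\mathbf{R}_{\ B\beta\gamma}^{A},\mathbf{R}_{\ \varepsilon\beta A}^{\alpha},\mathbf{R}_{\ B\gamma A}^{C},\mathbf{R}_{\ \beta BA}^{\alpha},\mathbf{R}_{\ ECB}^{A}\}$ carries an upper index of the same ($h$ or $v$) type as its first lower index. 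Consequently, in each of the four split traces exactly one partial sum survives: the $v$--trace drops out whenever $\overline{\alpha}$ is $h$, and the $h$--trace drops out whenever $\overline{\alpha}$ is $v$. This immediately singles out the contributing family in each case --- $\mathbf{R}_{\ \varepsilon\beta\gamma}^{\alpha}$ for $\mathbf{R}_{\alpha\beta}$, the mixed families $\mathbf{R}_{\ \varepsilon\beta A}^{\alpha}$ and $\mathbf{R}_{\ B\gamma A}^{C}$ for $\mathbf{R}_{\alpha A}$ and $\mathbf{R}_{A\alpha}$, and the purely vertical $\mathbf{R}_{\ ECB}^{A}$ for $\mathbf{R}_{AB}$.

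The one place demanding care --- and the only genuine, if modest, obstacle --- is the sign bookkeeping, which comes from the antisymmetry of the curvature \emph{2--form} $\mathcal{R}_{\ \overline{\beta}}^{\overline{\alpha}}=\mathbf{R}_{\ \overline{\beta}\overline{\gamma}\overline{\delta}}^{\overline{\alpha}}\,\mathbf{e}^{\overline{\gamma}}\wedge\mathbf{e}^{\overline{\delta}}$ in its last two lower indices. For the mixed component with $(\overline{\alpha},\overline{\beta})=(\alpha,A)$, the canonical trace places the summed $h$--index in the last slot, $\mathbf{R}_{\ \alpha A\gamma}^{\gamma}$, whereas Theorem \ref{thasa1} records $\mathbf{R}_{\ \varepsilon\beta A}^{\alpha}$ with the vertical index last; the identity $\mathbf{R}_{\ \alpha A\gamma}^{\gamma}=-\mathbf{R}_{\ \alpha\gamma A}^{\gamma}$ then produces precisely the minus sign in $\mathbf{R}_{\alpha A}:=-\mathbf{R}_{\ \alpha\gamma A}^{\gamma}$. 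The other three components require no such reordering and inherit their natural signs directly, so collecting the four surviving traces into one N--adapted array yields (\ref{driccialg}) and completes the proof.
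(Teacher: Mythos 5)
Your proposal is correct and follows essentially the same route as the paper, whose proof simply states that the coefficients (\ref{driccialg}) are the respective contractions of (\ref{dcurv}); you have merely made explicit the block structure (upper index matching the type of the first lower index) that selects one surviving partial trace in each of the four cases, and the antisymmetry of the curvature $2$--form in its last two lower indices that accounts for the minus sign in $\mathbf{R}_{\alpha A}$. No gap to report.
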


\begin{proof}
The formulas for $h$--$v$--components (\ref{driccialg}) are respective
contractons of the coefficients (\ref{dcurv}). $\square $
\end{proof}

\vskip5pt

\begin{definition}
A metric structure on $\mathcal{T}^{E}\mathbf{P}$ is defined by a
nondegenerate symmetric second rank tensor $\overline{\mathbf{g}}=\{$ $%
\mathbf{g}_{\overline{\alpha }\overline{\beta }}\}.$ Such a tensor is called
a distinguished metric, i.e. a d--metric, if its coefficients are defined
with respect to tensor products of N--adapted frames (\ref{ddifalg}),%
\begin{equation}
\overline{\mathbf{g}}=\mathbf{g}_{\overline{\alpha }\overline{\beta }}%
\mathbf{e}^{\overline{\beta }}\otimes \mathbf{e}^{\overline{\beta }}=\
\mathbf{g}_{\alpha \beta }\ \mathcal{X}^{\alpha }\otimes \mathcal{X}^{\beta
}+\ \mathbf{g}_{AB}\ \delta ^{A}\otimes \delta ^{B}.  \label{dm}
\end{equation}
\end{definition}

We can define the inverse d--metric $\mathbf{g}^{\overline{\alpha }\overline{%
\beta }}$ and inverse N--adapted h--metric, $\ \mathbf{g}^{\alpha \beta },$
and v--metric, $\mathbf{g}^{AB},$ by inverting respectively the matrix $%
\mathbf{g}_{\overline{\alpha }\overline{\beta }}$ and its bloc components, $%
\ \mathbf{g}_{\alpha \beta }$ and $\ \mathbf{g}_{AB}.$

The scalar curvature $\ ^{s}\mathbf{R}$ of $\ \mathcal{D}$ is by definition
\begin{equation}
\ ^{s}\mathbf{R}:=\mathbf{g}^{\overline{\alpha }\overline{\beta }}\mathbf{R}%
_{\overline{\alpha }\overline{\beta }}=\mathbf{g}^{\alpha \beta }\mathbf{R}%
_{\alpha \beta }+\mathbf{g}^{AB}\mathbf{R}_{AB}.  \label{sdcurv}
\end{equation}%
Using (\ref{driccialg}) and (\ref{sdcurv}), we can compute the Einstein
tensor $\mathbf{E}_{\overline{\alpha }\overline{\beta }}$ of $\mathcal{D},$
\begin{equation}
\mathbf{E}_{\overline{\alpha }\overline{\beta }}\doteqdot \mathbf{R}_{%
\overline{\alpha }\overline{\beta }}-\frac{1}{2}\mathbf{g}_{\overline{\alpha
}\overline{\beta }}\ \ ^{s}\mathbf{R}.  \label{enstdt}
\end{equation}%
Such a tensor can be used for modeling effective gravity theories on
sections of $\mathcal{T}^{E}\mathbf{P}$ with nonholonomic frame structure %
\cite{vrflg,vqgr2,valg1,valg2}.

\subsection{Metric compatible geometries on Lie d--algebroids}

Additionally to torsion (\ref{dtors}) and curvature (\ref{dcurv}),
d--connections are characterized by nonmetricity field $\mathcal{Q}(%
\overline{y}):=\mathcal{D}_{\overline{y}}\overline{\mathbf{g}}$, when $%
\mathbf{Q}_{\ \overline{\alpha }\overline{\beta }}^{\overline{\gamma }}=%
\mathcal{D}^{\overline{\gamma }}\overline{\mathbf{g}}_{\overline{\alpha }%
\overline{\beta }}$.

\begin{proposition}
The condition of metric compatibility, $\mathcal{Q}=\mathcal{D}\overline{%
\mathbf{g}}=0$, splits into respective conditions for $h$-$v$--components, $%
\mathcal{D}_{\gamma }\mathbf{g}_{\alpha \beta }=0,\mathcal{D}_{A}\mathbf{g}%
_{\alpha \beta }=0,\mathcal{D}_{\gamma }\mathbf{g}_{AB}=0,\mathcal{D}_{C}%
\mathbf{g}_{AB}=0$.
\end{proposition}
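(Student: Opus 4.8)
The plan is to exploit two structural facts already recorded above: that $\mathcal{D}$ is a d--connection, hence by definition preserves the $h$--$v$--splitting (\ref{nonalg}) under parallelism, and that the d--metric (\ref{dm}) is block--diagonal in the N--adapted coframe $\{\mathcal{X}^{\alpha},\delta^{A}\}$, so that it has no mixed component, $\mathbf{g}_{\alpha B}=\mathbf{g}_{A\beta}=0$. First I would translate the d--connection property into a statement about coefficients: since $\mathcal{D}_{\overline{\gamma}}\delta_{\beta}$ stays horizontal and $\mathcal{D}_{\overline{\gamma}}\mathcal{V}_{B}$ stays vertical, the full list $\mathbf{\Gamma}_{\ \overline{\beta}\overline{\gamma}}^{\overline{\alpha}}=(\mathbf{L}_{\beta\gamma}^{\alpha},\mathbf{L}_{B\gamma}^{A};\mathbf{B}_{\beta C}^{\alpha},\mathbf{B}_{BC}^{A})$ contains no ``mixing'' pieces: the components $\mathbf{\Gamma}^{A}_{\ \overline{\gamma}\beta}$ (differentiating an $h$--frame, $v$--valued output) and $\mathbf{\Gamma}^{\alpha}_{\ \overline{\gamma}B}$ (differentiating a $v$--frame, $h$--valued output) vanish identically.

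Next I would write out the N--adapted components of the nonmetricity $\mathbf{Q}_{\overline{\gamma}\,\overline{\alpha}\overline{\beta}}=(\mathcal{D}_{\overline{\gamma}}\overline{\mathbf{g}})_{\overline{\alpha}\overline{\beta}}$ from the standard rule $(\mathcal{D}_{\overline{\gamma}}\overline{\mathbf{g}})_{\overline{\alpha}\overline{\beta}}=\mathbf{e}_{\overline{\gamma}}(\mathbf{g}_{\overline{\alpha}\overline{\beta}})-\mathbf{\Gamma}^{\overline{\mu}}_{\ \overline{\gamma}\overline{\alpha}}\mathbf{g}_{\overline{\mu}\overline{\beta}}-\mathbf{\Gamma}^{\overline{\mu}}_{\ \overline{\gamma}\overline{\beta}}\mathbf{g}_{\overline{\alpha}\overline{\mu}}$, splitting every index into its $h$-- and $v$--part. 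The decisive step, where the real content of the statement lies, is to check that the mixed block $\mathbf{Q}_{\overline{\gamma}\,\alpha B}$ vanishes automatically. Indeed, the inhomogeneous term drops because $\mathbf{g}_{\alpha B}=0$; in $\mathbf{\Gamma}^{\overline{\mu}}_{\ \overline{\gamma}\alpha}\mathbf{g}_{\overline{\mu}B}$ the factor $\mathbf{g}_{\overline{\mu}B}$ is nonzero only for $\overline{\mu}$ vertical, which calls for the mixing coefficient $\mathbf{\Gamma}^{M}_{\ \overline{\gamma}\alpha}=0$; and symmetrically $\mathbf{\Gamma}^{\overline{\mu}}_{\ \overline{\gamma}B}\mathbf{g}_{\alpha\overline{\mu}}$ needs $\mathbf{\Gamma}^{\nu}_{\ \overline{\gamma}B}=0$. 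Hence the mixed components of $\mathcal{Q}$ are forced to zero and contribute no independent condition.

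Finally, by the same vanishing of mixing coefficients the purely horizontal and purely vertical blocks reduce to $\mathbf{Q}_{\overline{\gamma}\,\alpha\beta}=\mathcal{D}_{\overline{\gamma}}\mathbf{g}_{\alpha\beta}$ and $\mathbf{Q}_{\overline{\gamma}\,AB}=\mathcal{D}_{\overline{\gamma}}\mathbf{g}_{AB}$, the covariant derivatives involving only the block coefficients $(\mathbf{L},\mathbf{B})$. Since the families $\{\mathcal{X}^{\alpha}\otimes\mathcal{X}^{\beta}\}$ and $\{\delta^{A}\otimes\delta^{B}\}$ span independent subspaces of the N--adapted tensors, the single equation $\mathcal{Q}=\mathcal{D}\overline{\mathbf{g}}=0$ is equivalent to the simultaneous vanishing of both blocks for each direction $\overline{\gamma}$. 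Splitting $\overline{\gamma}=(\gamma,C)$ into its $h$-- and $v$--parts then produces exactly the four stated equations $\mathcal{D}_{\gamma}\mathbf{g}_{\alpha\beta}=0$, $\mathcal{D}_{A}\mathbf{g}_{\alpha\beta}=0$, $\mathcal{D}_{\gamma}\mathbf{g}_{AB}=0$, $\mathcal{D}_{C}\mathbf{g}_{AB}=0$. I expect the only genuine obstacle, rather than routine bookkeeping, to be the clean vanishing of the mixed nonmetricity block; this is precisely what rests on combining the defining d--connection property with the block--diagonal form of the d--metric, and it is what makes the $h$--$v$--split of $\mathcal{Q}=0$ exhaustive.
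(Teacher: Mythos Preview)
Your proposal is correct and follows essentially the same approach as the paper, which simply states that the result follows from a straightforward computation upon inserting the d--metric coefficients (\ref{dm}) into $\mathcal{D}_{\overline{y}}\overline{\mathbf{g}}=0$ with $\overline{y}=y^{\alpha}\delta_{\alpha}+y^{A}\mathcal{V}_{A}$. You have merely spelled out in detail what the paper leaves implicit, namely that the block--diagonal form of $\overline{\mathbf{g}}$ together with the absence of mixing coefficients in a d--connection forces the mixed nonmetricity components to vanish identically, so the remaining $h$-- and $v$--blocks give exactly the four listed conditions.
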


\begin{proof}
It follows from a straightforward computation when the coefficients of
d--metric $\mathbf{g}_{\overline{\alpha }\overline{\beta }}$ (\ref{dm}) are
introduced into $\mathcal{D}_{\overline{y}}\overline{\mathbf{g}}=0$, for $%
\overline{y}=y^{\overline{\alpha }}\mathbf{e}_{\overline{\alpha }}=y^{\alpha
}\delta _{\alpha }+y^{A}\mathcal{V}_{A}.$ $\square $
\end{proof}

\vskip5pt

In this paper, we shall work with two ''preferred'' linear connections
completely defined by a d--metric structure $\overline{\mathbf{g}}$ on $%
\mathcal{T}^{E}\mathbf{P}$:

\begin{theorem}
There is a canonical d--connection $\widehat{\mathcal{D}}$ \ for which $%
\widehat{\mathcal{D}}\overline{\mathbf{g}}=0$ and $h$- and $v$-torsions (\ref%
{dtors}) are prescribed, respectively, to be with coefficients $\widehat{T}%
_{\ \beta \gamma }^{\alpha }=C_{\ \beta \gamma }^{\alpha }$ and $\widehat{T}%
_{\ BC}^{A}=0$ computed with respect to N--adapted frames.
\end{theorem}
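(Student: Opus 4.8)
The plan is to mimic the classical construction of the canonical distinguished connection on tangent bundles (the Cartan or Chern--Rund type $N$--connection adapted connection), now transcribed to the prolongation Lie algebroid $\mathcal{T}^{E}\mathbf{P}$. The key observation is that a $d$--connection is completely specified by its four $N$--adapted blocks $(\mathbf{L}_{\beta\gamma}^{\alpha},\mathbf{L}_{B\gamma}^{A},\mathbf{B}_{\beta C}^{\alpha},\mathbf{B}_{BC}^{A})$, so I would simply write down explicit formulas for these blocks in terms of the $d$--metric components $\mathbf{g}_{\alpha\beta},\mathbf{g}_{AB}$, the anchor functions $\rho_{\alpha}^{i}$, the structure functions $C_{\alpha\beta}^{\gamma}$, and the $N$--connection coefficients $\mathcal{N}_{\alpha}^{A}$, and then verify by direct computation that these formulas (a) define a genuine $d$--connection, (b) satisfy the metricity conditions of the Proposition above, and (c) reproduce the prescribed torsion coefficients $\widehat{T}_{\beta\gamma}^{\alpha}=C_{\beta\gamma}^{\alpha}$ and $\widehat{T}_{BC}^{A}=0$.

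Concretely, I would first observe that on a tangent bundle the canonical $d$--connection has the Christoffel--like shape
\begin{equation*}
\widehat{\mathbf{L}}_{\beta\gamma}^{\alpha}=\tfrac12\,\mathbf{g}^{\alpha\tau}\left(\delta_{\gamma}\mathbf{g}_{\beta\tau}+\delta_{\beta}\mathbf{g}_{\gamma\tau}-\delta_{\tau}\mathbf{g}_{\beta\gamma}\right),
\end{equation*}
with the horizontal derivatives $\rho_{\alpha}^{i}\partial_{i}$ replaced by the $N$--adapted operators $\delta_{\alpha}=\mathcal{X}_{\alpha}-\mathcal{N}_{\alpha}^{C}\mathcal{V}_{C}$ of (\ref{dderalg}), and analogously
\begin{equation*}
\widehat{\mathbf{B}}_{BC}^{A}=\tfrac12\,\mathbf{g}^{AD}\left(\mathcal{V}_{C}\mathbf{g}_{BD}+\mathcal{V}_{B}\mathbf{g}_{CD}-\mathcal{V}_{D}\mathbf{g}_{BC}\right)
\end{equation*}
for the purely vertical block. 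The two mixed blocks $\widehat{\mathbf{L}}_{B\gamma}^{A}$ and $\widehat{\mathbf{B}}_{\beta C}^{\alpha}$ I would fix by the minimal prescription guaranteeing both metricity and the vanishing of the mixed torsions that are \emph{not} required to survive, e.g. $\widehat{\mathbf{L}}_{B\gamma}^{A}=\mathcal{V}_{B}\mathcal{N}_{\gamma}^{A}+\tfrac12\,\mathbf{g}^{AD}\left(\mathcal{V}_{C}\mathbf{g}_{BD}\,\mathcal{N}_{\gamma}^{C}-\mathbf{g}_{DC}\,\mathcal{V}_{B}\mathcal{N}_{\gamma}^{C}-\mathbf{g}_{CB}\,\mathcal{V}_{D}\mathcal{N}_{\gamma}^{C}\right)$ and $\widehat{\mathbf{B}}_{\beta C}^{\alpha}=\tfrac12\,\mathbf{g}^{\alpha\tau}\mathcal{V}_{C}\mathbf{g}_{\beta\tau}$. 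Substituting these into the torsion formulas (\ref{dtors}), the symmetric part of $\widehat{\mathbf{L}}_{\beta\gamma}^{\alpha}$ cancels in $\widehat{T}_{\beta\gamma}^{\alpha}=\widehat{\mathbf{L}}_{\beta\gamma}^{\alpha}-\widehat{\mathbf{L}}_{\gamma\beta}^{\alpha}+C_{\beta\gamma}^{\alpha}$, leaving exactly $C_{\beta\gamma}^{\alpha}$, while the symmetry $\widehat{\mathbf{B}}_{BC}^{A}=\widehat{\mathbf{B}}_{CB}^{A}$ of the Christoffel expression forces $\widehat{T}_{BC}^{A}=0$.

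The verification of metricity is the routine but load--bearing part: I would check the four conditions $\widehat{\mathcal{D}}_{\gamma}\mathbf{g}_{\alpha\beta}=0$, $\widehat{\mathcal{D}}_{A}\mathbf{g}_{\alpha\beta}=0$, $\widehat{\mathcal{D}}_{\gamma}\mathbf{g}_{AB}=0$, $\widehat{\mathcal{D}}_{C}\mathbf{g}_{AB}=0$ of the Proposition by expanding each covariant derivative in terms of the proposed coefficients, where the antisymmetrizations built into the Christoffel combinations collapse the derivative terms pairwise; this is the standard mechanism by which a Koszul--type formula yields metric compatibility, and it goes through block by block because the $d$--metric (\ref{dm}) has no $h$--$v$ cross terms. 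Finally, for uniqueness I would argue exactly as in the Levi--Civita case: any two $d$--connections differ by a tensor, and imposing both $\widehat{\mathcal{D}}\overline{\mathbf{g}}=0$ and the prescribed torsion pins this difference tensor down to zero via the same Koszul inversion.

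The main obstacle I anticipate is not any single computation but the \emph{bookkeeping} of the nonholonomy: unlike the tangent--bundle case, the brackets $\left\lfloor\delta_{\alpha},\delta_{\beta}\right\rfloor^{\pi}$ carry both the algebroid constants $C_{\alpha\beta}^{\gamma}$ and the curvature term $\Omega_{\alpha\beta}^{C}\mathcal{V}_{C}$, so one must be careful that the prescribed $\widehat{T}_{\beta\gamma}^{\alpha}=C_{\beta\gamma}^{\alpha}$ is genuinely compatible with metricity rather than over--determining the system. The delicate point is therefore to confirm that fixing $h$-- and $v$--torsions to the stated values leaves precisely enough freedom in the mixed coefficients $\widehat{\mathbf{L}}_{B\gamma}^{A},\widehat{\mathbf{B}}_{\beta C}^{\alpha}$ to absorb all metricity constraints and no more; I expect this to work because the count of free coefficients matches the count of independent metricity equations exactly once the symmetric horizontal and vertical blocks are fixed by the Christoffel formulas, which is the structural reason such a canonical $\widehat{\mathcal{D}}$ exists and is unique.
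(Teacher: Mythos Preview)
Your strategy coincides with the paper's: exhibit explicit Christoffel--type expressions for the four N--adapted blocks $(\widehat{L}_{\beta\gamma}^{\alpha},\widehat{L}_{B\gamma}^{A},\widehat{B}_{\beta C}^{\alpha},\widehat{B}_{BC}^{A})$ and verify metricity and the torsion constraints by direct substitution. The paper's proof is literally that --- it lists the formulas (\ref{candcon}) and asserts the check is straightforward --- so the approach is the same.

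Two concrete discrepancies in your formulas deserve comment. First, the paper's horizontal block $\widehat{L}_{\beta\gamma}^{\alpha}$ carries, beyond your pure Christoffel term, the algebroid correction $\tfrac12\mathbf{g}^{\alpha\tau}(\mathbf{g}_{\beta\varepsilon}C_{\tau\gamma}^{\varepsilon}+\mathbf{g}_{\gamma\varepsilon}C_{\tau\beta}^{\varepsilon}-\mathbf{g}_{\tau\varepsilon}C_{\beta\gamma}^{\varepsilon})$. This extra piece cancels in the metricity check, so both versions are metric compatible; but it shifts the $h$--torsion. Your symmetric $\widehat{L}$ yields $\widehat{T}_{\beta\gamma}^{\alpha}=C_{\beta\gamma}^{\alpha}$ exactly as the theorem prescribes, while the paper's formula (the Koszul expression (\ref{lccon}) transported to the $\delta$--frame) produces $\widehat{T}_{\beta\gamma}^{\alpha}=0$. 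On this block your version is the one consistent with the stated torsion.

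Second, your tentative mixed block $\widehat{L}_{B\gamma}^{A}$ is wrong: the first term inside the bracket must be the full N--adapted derivative $\delta_{\gamma}\mathbf{g}_{BC}$, not $\mathcal{V}_{C}\mathbf{g}_{BD}\,\mathcal{N}_{\gamma}^{C}$. With your expression the condition $\widehat{\mathcal{D}}_{\gamma}\mathbf{g}_{AB}=\delta_{\gamma}\mathbf{g}_{AB}-\widehat{L}_{A\gamma}^{C}\mathbf{g}_{CB}-\widehat{L}_{B\gamma}^{C}\mathbf{g}_{AC}=0$ fails because the $\mathcal{X}_{\gamma}$--part of $\delta_{\gamma}\mathbf{g}_{AB}$ has nothing to cancel against. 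Replace it by the paper's $\widehat{L}_{B\gamma}^{A}=\mathcal{V}_{B}(\mathcal{N}_{\gamma}^{A})+\tfrac12\mathbf{g}^{AC}\bigl(\delta_{\gamma}\mathbf{g}_{BC}-\mathbf{g}_{DC}\mathcal{V}_{B}(\mathcal{N}_{\gamma}^{D})-\mathbf{g}_{DB}\mathcal{V}_{C}(\mathcal{N}_{\gamma}^{D})\bigr)$ and the verification you outline goes through.
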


\begin{proof}
We can check by straightforward computations that the conditions of this
theorem as satisfied if and only if $\widehat{\mathcal{D}}$ is taken with
N--adapted coefficients $\widehat{\mathbf{\Gamma }}_{\ \overline{\beta }%
\overline{\gamma }}^{\overline{\alpha }}=\left( \widehat{\mathbf{L}}_{\beta
\gamma }^{\alpha },\widehat{\mathbf{L}}_{B\gamma \;}^{A};\widehat{\mathbf{B}}%
_{\beta C}^{\alpha },\widehat{\mathbf{B}}_{BC}^{A}\right) $ for {\small
\begin{eqnarray}
\widehat{L}_{\beta \gamma }^{\alpha } &=&\frac{1}{2}\mathbf{g}^{\alpha \tau
}\left( \delta _{\gamma }\mathbf{g}_{\beta \tau }+\delta _{\beta }\mathbf{g}%
_{\gamma \tau }-\delta _{\tau }\mathbf{g}_{\beta \gamma }\right) + \frac{1}{2%
}\mathbf{g}^{\alpha \tau }\left( \mathbf{g}_{\beta \varepsilon }C_{\ \tau
\gamma }^{\varepsilon }+\mathbf{g}_{\gamma \varepsilon }C_{\ \tau \beta
}^{\varepsilon }-\mathbf{g}_{\tau \varepsilon }C_{\ \beta \gamma
}^{\varepsilon }\right),  \notag \\
\widehat{L}_{B\gamma }^{A} &=&\mathcal{V}_{B}(\mathcal{N}_{\gamma }^{A})+%
\frac{1}{2}\mathbf{g}^{AC}\left( \delta _{\gamma }\mathbf{g}_{BC}-\mathbf{g}%
_{DC}\mathcal{V}_{B}(\mathcal{N}_{\gamma }^{D})\ -\mathbf{g}_{DB}\mathcal{V}%
_{C}(\mathcal{N}_{\gamma }^{D})\right),  \label{candcon} \\
\widehat{B}_{\beta C}^{\alpha } &=&\frac{1}{2}\mathbf{g}^{\alpha \tau }%
\mathcal{V}_{C}\mathbf{g}_{\beta \tau },\ \widehat{B}_{BC}^{A}=\frac{1}{2}%
\mathbf{g}^{AD}\left( \mathcal{V}_{C}\mathbf{g}_{BD}+\mathcal{V}_{B}\mathbf{g%
}_{CD}-\mathcal{V}_{D}\mathbf{g}_{BC}\right) . \  \square   \notag
\end{eqnarray}%
}
\end{proof}
\vskip5pt

The nontrivial values of torsion of $\widehat{\mathcal{D}},$ i.e. N--adapted
coefficients $\widehat{T}_{\ \beta \gamma }^{\alpha },$ $\widehat{T}_{\
\beta A}^{\alpha },\widehat{T}_{\ \beta \alpha }^{A}$ and $\widehat{T}_{\
B\alpha }^{A},$ are computed by introducing the canonical d--connection
coefficients (\ref{candcon}) into formulas (\ref{dtors}).

\begin{theorem}[--Definition]
There is a metric compatible Levi--Civita connection $\overline{\nabla }$
which is completely defined by a d--metric structure $\overline{\mathbf{g}}$
on $\mathcal{T}^{E}\mathbf{P}$ following the condition of zero torsion, $%
~^{\nabla }\mathcal{T}^{\overline{\alpha }}=\{K_{\ \overline{\beta }%
\overline{\gamma }}^{\overline{\alpha }}\}=0.$
\end{theorem}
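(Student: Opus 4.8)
The plan is to mimic the Koszul construction already used for $~^{\varpi}\nabla$ on $\mathcal{E}$, now transported to the prolongation Lie algebroid $\mathcal{T}^{E}\mathbf{P}$ with its anchor $\rho^{\pi}$ and bracket $\left\lfloor\cdot,\cdot\right\rfloor^{\pi}$. First I would fix the two defining requirements for $\overline{\nabla}$: metric compatibility, $\rho^{\pi}(\overline{x})\big(\overline{\mathbf{g}}(\overline{y},\overline{z})\big)=\overline{\mathbf{g}}(\overline{\nabla}_{\overline{x}}\overline{y},\overline{z})+\overline{\mathbf{g}}(\overline{y},\overline{\nabla}_{\overline{x}}\overline{z})$, and vanishing torsion, $\overline{\nabla}_{\overline{x}}\overline{y}-\overline{\nabla}_{\overline{y}}\overline{x}=\left\lfloor\overline{x},\overline{y}\right\rfloor^{\pi}$. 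For uniqueness I would write the metricity identity for the three cyclic orderings of $(\overline{x},\overline{y},\overline{z})$, add two and subtract the third, and eliminate the symmetric parts using the torsion-free condition. This yields the Koszul formula
\[
2\overline{\mathbf{g}}(\overline{\nabla}_{\overline{x}}\overline{y},\overline{z})=\rho^{\pi}(\overline{x})\overline{\mathbf{g}}(\overline{y},\overline{z})+\rho^{\pi}(\overline{y})\overline{\mathbf{g}}(\overline{x},\overline{z})-\rho^{\pi}(\overline{z})\overline{\mathbf{g}}(\overline{x},\overline{y})+\overline{\mathbf{g}}(\left\lfloor\overline{x},\overline{y}\right\rfloor^{\pi},\overline{z})-\overline{\mathbf{g}}(\left\lfloor\overline{y},\overline{z}\right\rfloor^{\pi},\overline{x})+\overline{\mathbf{g}}(\left\lfloor\overline{z},\overline{x}\right\rfloor^{\pi},\overline{y}),
\]
and since $\overline{\mathbf{g}}$ is nondegenerate the right-hand side determines $\overline{\nabla}_{\overline{x}}\overline{y}$, so at most one such connection exists.

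For existence I would take this same formula as the definition of $\overline{\nabla}$ and verify the connection axioms analogous to (\ref{cond1}): the $\mathbb{R}$-bilinearity is immediate, while the anchored Leibniz rule $\overline{\nabla}_{\overline{x}}(f\overline{y})=\rho^{\pi}(\overline{x})(f)\,\overline{y}+f\,\overline{\nabla}_{\overline{x}}\overline{y}$ and the tensoriality $\overline{\nabla}_{f\overline{x}}\overline{y}=f\,\overline{\nabla}_{\overline{x}}\overline{y}$ follow by expanding $\rho^{\pi}$ and $\left\lfloor\cdot,\cdot\right\rfloor^{\pi}$ on products $f\overline{y}$ and using that $\rho^{\pi}$ is a Lie-algebra homomorphism. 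Metric compatibility and zero torsion are then recovered by, respectively, symmetrizing and antisymmetrizing the defining identity in the appropriate slots.

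To obtain the explicit N-adapted coefficients $K_{\ \overline{\beta}\overline{\gamma}}^{\overline{\alpha}}$, and thereby justify the Definition part, I would write $\overline{\nabla}=\widehat{\mathcal{D}}+\widehat{\mathbf{Z}}$, where $\widehat{\mathcal{D}}$ is the canonical d-connection (\ref{candcon}) and $\widehat{\mathbf{Z}}$ the distortion tensor. Both connections being metric compatible, $\widehat{\mathbf{Z}}$ is algebraically fixed by the nontrivial torsion coefficients $\widehat{T}_{\ \beta\gamma}^{\alpha},\widehat{T}_{\ \beta A}^{\alpha},\widehat{T}_{\ \beta\alpha}^{A},\widehat{T}_{\ B\alpha}^{A}$ recorded after the canonical-connection theorem; substituting (\ref{candcon}) then gives $K_{\ \overline{\beta}\overline{\gamma}}^{\overline{\alpha}}$ in closed form. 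Evaluating the Koszul formula directly on the N-adapted frame $\mathbf{e}_{\overline{\alpha}}$ of (\ref{dderalg}), using $dx^{i}=\rho_{\alpha}^{i}\mathcal{X}^{\alpha}$ together with the generalized brackets whose anholonomy coefficients are $W_{\overline{\alpha}\overline{\beta}}^{\overline{\gamma}}=\{C_{\alpha\beta}^{\gamma},\Omega_{\alpha\beta}^{C},\partial_{B}\mathcal{N}_{\alpha}^{C}\}$, would serve as an independent cross-check.

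The main obstacle I anticipate is not the algebra but the well-definedness of $\overline{\nabla}$ as a genuine linear connection on the prolongation algebroid: one must confirm that the Koszul right-hand side is $C^{\infty}(M)$-linear in $\overline{z}$ and obeys the correct anchored Leibniz rule in the differentiating slot. This requires the bracket identity $\left\lfloor\overline{x},f\overline{y}\right\rfloor^{\pi}=f\left\lfloor\overline{x},\overline{y}\right\rfloor^{\pi}+\rho^{\pi}(\overline{x})(f)\,\overline{y}$ and the homomorphism property of $\rho^{\pi}$ to cancel the anomalous $\rho^{\pi}(\cdot)(f)$ terms; the presence of the nontrivial anholonomy $W_{\overline{\alpha}\overline{\beta}}^{\overline{\gamma}}$ is precisely what makes this bookkeeping more delicate than in the holonomic Riemannian case.
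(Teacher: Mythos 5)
Your proposal is correct, but it reaches the result by a genuinely different route than the paper. The paper's proof is purely constructive: it posits the distortion relation $K_{\ \overline{\beta }\overline{\gamma }}^{\overline{\alpha }}=\widehat{\mathbf{\Gamma }}_{\ \overline{\beta }\overline{\gamma }}^{\overline{\alpha }}+\widehat{\mathbf{Z}}_{\ \overline{\beta }\overline{\gamma }}^{\overline{\alpha }}$ with the explicit N--adapted distortion coefficients (\ref{deft}), checks by direct computation that the resulting connection is metric compatible and torsion free, and notes that $\widehat{\mathbf{Z}}$ is a linear algebraic combination of the torsion coefficients of $\widehat{\mathcal{D}}$, so that $\widehat{\mathbf{T}}=0$ iff $\widehat{\mathbf{Z}}=0$. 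You instead first run the Koszul argument on the prolongation algebroid $\mathcal{T}^{E}\mathbf{P}$ --- exactly mirroring the paper's own construction of $~^{\varpi }\nabla $ on $\mathcal{E}$ in section 2.1 --- and only afterwards extract the coefficients via the same distortion relation. Your route buys something the paper's verification leaves implicit: uniqueness. The phrase ``completely defined by $\overline{\mathbf{g}}$'' is a uniqueness claim, and the cyclic-permutation argument delivers it for free, whereas exhibiting one connection satisfying the two conditions does not. Conversely, the paper's route is what actually produces the closed-form coefficients (\ref{deft}) needed later for the Ricci distortions (\ref{driccidist}), so the two arguments are complementary.

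Two small points of alignment. First, the paper's convention is $\mathcal{T}(\overline{x},\overline{y})=\mathcal{D}_{\overline{x}}\overline{y}-\mathcal{D}_{\overline{y}}\overline{x}+\left\lfloor \overline{x},\overline{y}\right\rfloor ^{\pi }$, so zero torsion reads $\overline{\nabla }_{\overline{x}}\overline{y}-\overline{\nabla }_{\overline{y}}\overline{x}=-\left\lfloor \overline{x},\overline{y}\right\rfloor ^{\pi }$; your Koszul formula carries the opposite sign on the bracket terms and must be flipped to reproduce the paper's coefficients (compare the signs in the displayed formula for $2\varpi (~^{\varpi }\nabla _{X}Y,Z)$). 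Second, $\mathcal{T}^{E}\mathbf{P}$ is a Lie algebroid over $P$, so the tensoriality and Leibniz checks in your well-definedness step should be carried out over $C^{\infty }(P)$ rather than $C^{\infty }(M)$. Neither point affects the validity of your argument.
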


\begin{proof}
Such a connection $\overline{\nabla }=K_{\ \overline{\beta }\overline{\gamma
}}^{\overline{\alpha }}=\left( \overline{L}_{\beta \gamma }^{\alpha },%
\overline{L}_{B\gamma \;}^{A};\overline{B}_{\beta C}^{\alpha },\overline{B}%
_{BC}^{A}\right) $ can be defined with respect to N--adapted frames for the
same d--metric structure $\overline{\mathbf{g}}$ which is used for
constructing $\widehat{\mathcal{D}}$ (\ref{candcon}), but with additional
constraints that all torsion coefficients (\ref{dtors}) are zero. We can
verify via straightforward computations with respect to (\ref{dderalg}) and (%
\ref{ddifalg}) that the condition of theorem is satisfied by a distortion
relation
\begin{equation}
K_{\ \overline{\beta }\overline{\gamma }}^{\overline{\alpha }}=\widehat{%
\mathbf{\Gamma }}_{\ \overline{\beta }\overline{\gamma }}^{\overline{\alpha }%
}+\widehat{\mathbf{Z}}_{\ \overline{\beta }\overline{\gamma }}^{\overline{%
\alpha }},  \label{distrel1}
\end{equation}%
where the distortion tensor $\widehat{\mathcal{Z}}=\{\widehat{\mathbf{Z}}_{\
\alpha \beta }^{\gamma }\}$ is given by N--adapted coefficients {\small
\begin{eqnarray}
\ \widehat{\mathbf{Z}}_{\beta \gamma }^{A} &=&-\widehat{\mathbf{B}}_{\beta
B}^{\alpha }\mathbf{g}_{\alpha \gamma }\mathbf{g}^{AB}-\frac{1}{2}\Omega
_{\beta \gamma }^{A},~\widehat{\mathbf{Z}}_{B\gamma }^{\alpha }=\frac{1}{2}%
\Omega _{\alpha \gamma }^{C}\mathbf{g}_{CB}\mathbf{g}^{\beta \alpha }-\Xi
_{\beta \gamma }^{\alpha \tau }~\widehat{\mathbf{B}}_{\tau B}^{\beta },
\notag \\
\widehat{\mathbf{Z}}_{B\gamma }^{A} &=&~^{+}\Xi _{CD}^{AB}~\widehat{\mathbf{T%
}}_{\gamma B}^{C},\ \widehat{\mathbf{Z}}_{\gamma B}^{i}=\frac{1}{2}\Omega
_{\beta \gamma }^{A}\mathbf{g}_{CB}\mathbf{g}^{\beta \alpha }+\Xi _{\beta
\gamma }^{\alpha \tau }~\widehat{\mathbf{B}}_{\tau B}^{\beta },\ \widehat{%
\mathbf{Z}}_{\beta \gamma }^{\alpha }=0,  \label{deft} \\
\ \widehat{\mathbf{Z}}_{\beta B}^{A} &=&-~^{-}\Xi _{CB}^{AD}~\widehat{%
\mathbf{T}}_{\beta D}^{C},\ \widehat{\mathbf{Z}}_{BC}^{A}=0,\ \widehat{%
\mathbf{Z}}_{AB}^{\alpha }=-\frac{\mathbf{g}^{\alpha \beta }}{2}\left[
\widehat{\mathbf{T}}_{\beta A}^{C}\mathbf{g}_{CB}+\widehat{\mathbf{T}}%
_{\beta B}^{C}\mathbf{g}_{CA}\right] ,  \notag
\end{eqnarray}%
for $\ \Xi _{\beta \gamma }^{\alpha \tau }~=\frac{1}{2}(\delta _{\beta
}^{\alpha }\delta _{\gamma }^{\tau }-\mathbf{g}_{\beta \gamma }\mathbf{g}%
^{\alpha \tau })$ and $~^{\pm }\Xi _{CD}^{AB}=\frac{1}{2}(\delta
_{C}^{A}\delta _{D}^{B}\pm \mathbf{g}_{CD}\mathbf{g}^{AB}).$ }

The distortion coefficients (\ref{deft}) are such linear algebraic
combinations of coefficients of torsion of $\widehat{\mathcal{D}}$ that the
condition $\widehat{\mathbf{T}}_{\ \overline{\beta }\overline{\gamma }}^{%
\overline{\alpha }}=0$ is equivalent to $\widehat{\mathbf{Z}}_{\ \overline{%
\beta }\overline{\gamma }}^{\overline{\alpha }}=0,$ and inversely. So, we
can find a $h$--$v$--decomposition when $\overline{\Gamma }_{\ \overline{%
\beta }\overline{\gamma }}^{\overline{\alpha }}=\widehat{\mathbf{\Gamma }}%
_{\ \overline{\beta }\overline{\gamma }}^{\overline{\alpha }}$ even, in
general, $\overline{\nabla }\neq \widehat{\mathcal{D}}$;  such
connections are subjected to different rules of frame/coordinate transforms
on $\mathcal{T}^{E}\mathbf{P.}$ $\square $
\end{proof}

\vskip5pt

We emphasize that $\overline{\nabla }$ is not a d--connection and does not
preserve under parallelism the N--connection structure. Nevertheless, all
geometric data for $\left( \overline{\mathbf{g}},\overline{\nabla }\right) $
can be transformed equivalently into similar ones for $\left( \overline{%
\mathbf{g}},\widehat{\mathcal{D}},\mathcal{N}\right) $ when $\overline{%
\mathbf{g}}$ and $\mathcal{N}$ define a unique N--adapted splitting $%
\overline{\nabla }=\widehat{\mathcal{D}}+\widehat{\mathcal{Z}}.$\footnote{%
By geometric data, we consider any set of geometric tensors, forms,
connections etc and relevant field/evolution/ constraint equations which can
be used in a model of geometry and/or physical theory.}

\begin{corollary}
\label{coroffd}Any metric $\overline{\mathbf{g}}$ on $\mathcal{T}^{E}\mathbf{%
P}$ can be represented equivalently as a d--metric $\mathbf{g}_{\overline{%
\alpha }\overline{\beta }}$ (\ref{dm}) or, with respect to a local dual base
$dz^{\overline{\beta }}:=\{\mathcal{X}^{\alpha },\mathcal{V}^{B}\},$ in
generic off--diagonal form, $\mathbf{g}=g_{\overline{\alpha }\overline{\beta
}}dz^{\overline{\alpha }}\otimes dz^{\overline{\beta }},$ with
''non--boldface'' coefficients
\begin{equation}
\ g_{\overline{\alpha }\overline{\beta }}=\left[
\begin{array}{cc}
\ \mathbf{g}_{\alpha \beta }+\mathcal{N}_{\alpha }^{A}~\mathcal{N}_{\beta
}^{B}\ \ \mathbf{g}_{AB} & ~\mathcal{N}_{\beta }^{A}\ \ \mathbf{g}_{AC} \\
~\mathcal{N}_{\alpha }^{E}\ \mathbf{g}_{ED} & \mathbf{g}_{DC}%
\end{array}%
\right] .  \label{offd}
\end{equation}
\end{corollary}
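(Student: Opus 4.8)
The plan is to derive the off--diagonal form of Corollary \ref{coroffd} by a pure change of co--frame, substituting the N--adapted dual basis back into the block--diagonal expression (\ref{dm}). First I would express the N--adapted co--frame (\ref{ddifalg}) in terms of the coordinate--type dual basis $dz^{\overline{\beta}}=\{\mathcal{X}^{\alpha},\mathcal{V}^{B}\}$. Since the $h$--forms $\mathcal{X}^{\alpha}$ are common to both bases, only the $v$--forms are modified, $\delta^{B}=\mathcal{V}^{B}+\mathcal{N}_{\gamma}^{B}\mathcal{X}^{\gamma}$; I would confirm this by checking the duality pairing $\mathbf{e}^{\overline{\beta}}(\mathbf{e}_{\overline{\alpha}})=\delta_{\overline{\alpha}}^{\overline{\beta}}$ against (\ref{dderalg}), since a mismatched index placement in the $v$--block is the easiest point at which to slip.

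Next I would insert this into $\overline{\mathbf{g}}=\mathbf{g}_{\alpha\beta}\,\mathcal{X}^{\alpha}\otimes\mathcal{X}^{\beta}+\mathbf{g}_{AB}\,\delta^{A}\otimes\delta^{B}$ and expand $\delta^{A}\otimes\delta^{B}=(\mathcal{V}^{A}+\mathcal{N}_{\gamma}^{A}\mathcal{X}^{\gamma})\otimes(\mathcal{V}^{B}+\mathcal{N}_{\delta}^{B}\mathcal{X}^{\delta})$. This yields four groups of terms: a pure $\mathcal{V}^{A}\otimes\mathcal{V}^{B}$ piece with coefficient $\mathbf{g}_{AB}$, two mixed pieces each carrying a single factor of $\mathcal{N}$, and a pure $\mathcal{X}^{\gamma}\otimes\mathcal{X}^{\delta}$ piece carrying the quadratic factor $\mathbf{g}_{AB}\mathcal{N}_{\gamma}^{A}\mathcal{N}_{\delta}^{B}$.

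Finally I would collect the coefficients against $dz^{\overline{\alpha}}\otimes dz^{\overline{\beta}}$ and read off the four blocks of $g_{\overline{\alpha}\overline{\beta}}$. The $h$--$h$ block inherits both the original $\mathbf{g}_{\alpha\beta}$ and the quadratic correction, producing the upper--left entry $\mathbf{g}_{\alpha\beta}+\mathcal{N}_{\alpha}^{A}\mathcal{N}_{\beta}^{B}\mathbf{g}_{AB}$; the two off--diagonal blocks are the single--$\mathcal{N}$ terms $\mathcal{N}_{\beta}^{A}\mathbf{g}_{AC}$ and $\mathcal{N}_{\alpha}^{E}\mathbf{g}_{ED}$; and the $v$--$v$ block remains $\mathbf{g}_{DC}$, reproducing (\ref{offd}). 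The computation is routine bilinear bookkeeping, so there is no deep obstacle; the only place demanding care is the first step, fixing the correct co--frame transformation so that the two off--diagonal blocks emerge symmetric and with the index positions matching (\ref{offd}). Everything else is immediate substitution, and the claimed equivalence of the two representations is simply the statement that the same tensor $\overline{\mathbf{g}}$ has been rewritten in a different basis.
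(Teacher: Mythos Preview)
Your proposal is correct and follows essentially the same idea as the paper's proof: both arguments amount to a change of co--frame from $\mathbf{e}^{\overline{\beta}}=\{\mathcal{X}^{\alpha},\delta^{B}\}$ to $dz^{\overline{\beta}}=\{\mathcal{X}^{\alpha},\mathcal{V}^{B}\}$, with the paper packaging this as the transformation matrix (\ref{ft}) and you carrying out the substitution $\delta^{B}=\mathcal{V}^{B}+\mathcal{N}_{\gamma}^{B}\mathcal{X}^{\gamma}$ explicitly. Your caution about verifying the duality pairing first is well placed, since the printed formula (\ref{ddifalg}) has $\mathcal{V}^{\gamma}$ where $\mathcal{X}^{\gamma}$ is required for $\mathbf{e}^{\overline{\beta}}(\mathbf{e}_{\overline{\alpha}})=\delta^{\overline{\beta}}_{\overline{\alpha}}$ to hold against (\ref{dderalg}).
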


\begin{proof}
A frame transform $\mathbf{e}^{\overline{\beta }}\rightarrow dz^{\overline{%
\beta }^{\prime }}$ is dual to matrix transform $\ \partial _{\overline{%
\alpha }^{\prime }}\rightarrow \mathbf{e}_{\overline{\alpha }}^{\ \overline{%
\alpha }^{\prime }}\partial _{\overline{\alpha }^{\prime }},$ with
\begin{equation}
\ \mathbf{e}_{\overline{\alpha }}^{\ \overline{\alpha }^{\prime }}=\left[
\begin{array}{cc}
\mathbf{\ e}_{\alpha }^{\ \alpha ^{\prime }} & \mathcal{N}_{\alpha }^{B}\
\mathbf{e}_{B}^{\ A^{\prime }} \\
0 & \mathbf{e}_{A}^{\ A^{\prime }}%
\end{array}%
\right] .  \label{ft}
\end{equation}%
Additionally, one should be considered some quadratic relations between
coefficients $\ g_{\overline{\alpha }\overline{\beta }}=e_{\overline{\alpha }%
}^{\ \overline{\alpha }^{\prime }}\ _{\overline{\beta }}^{\ \overline{\beta }%
^{\prime }}\eta _{\overline{\alpha }^{\prime }\overline{\beta }^{\prime }},$
for $\eta _{\overline{\alpha }^{\prime }\overline{\beta }^{\prime
}}=diag[\pm 1,...\pm 1]$ fixing a local signature for metric on $\mathcal{T}%
^{E}\mathbf{P}.$ A metric (\ref{offd}) is called generic off--diagonal
because it can not be diagonalized by coordinate transforms. $\square $
\end{proof}

\begin{remark}
Introducing $K_{\ \overline{\beta }\overline{\gamma }}^{\overline{\alpha }}=%
\widehat{\mathbf{\Gamma }}_{\ \overline{\beta }\overline{\gamma }}^{%
\overline{\alpha }}$ (\ref{candcon}) into formulas (\ref{dcurv}), (\ref%
{driccialg}) and (\ref{sdcurv}), we compute respectively the coefficients of
curvature, $\widehat{\mathbf{R}}_{\ \overline{\beta }\overline{\gamma }%
\overline{\delta }}^{\overline{\alpha }}$,\ Ricci tensor, $\widehat{\mathbf{R%
}}_{\overline{\alpha }\overline{\beta }}$, and scalar curvature, $~^{s}%
\widehat{\mathbf{R}}.$ The distortions $K=\widehat{\mathbf{\Gamma }}+\widehat{%
\mathbf{Z}}$ (\ref{distrel1}) allows us to compute the distorting tensors ($%
\widehat{\mathbf{Z}}_{\ \overline{\beta }\overline{\gamma }\overline{\delta }%
}^{\overline{\alpha }},\widehat{\mathbf{Z}}_{\overline{\alpha }\overline{%
\beta }}$ and $\ ^{s}\widehat{\mathbf{Z}})$ resulting in similar values for the
(pseudo) Riemannian geometry on $\mathcal{T}^{E}\mathbf{P}$ determined by $%
\left( \overline{\mathbf{g}},K\right) ,$ i.e. to define $R_{\ \overline{\beta }%
\overline{\gamma }\overline{\delta }}^{\overline{\alpha }},R_{\ \overline{%
\beta }\overline{\gamma }}$ and $~^{s}R.$
\end{remark}

We do not present all technical details and component formulas for
geometrical objects outlined in above Remark. As an example, we provide the
distortion relations for the Ricci tensor,
\begin{equation}
R_{\overline{\alpha }\overline{\beta }}=\widehat{\mathbf{R}}_{\overline{%
\alpha }\overline{\beta }}+\widehat{\mathbf{Z}}_{\overline{\alpha }\overline{%
\beta }},  \label{driccidist}
\end{equation}%
\begin{eqnarray*}
R_{\overline{\beta }\overline{\gamma }} &=&R_{\ ~\overline{\beta }\overline{%
\gamma }\overline{\alpha }}^{\overline{\alpha }}=\mathbf{e}_{\overline{%
\delta }}K_{\ \overline{\beta }\overline{\gamma }}^{\overline{\alpha }}-%
\mathbf{e}_{\overline{\gamma }}K_{\ \overline{\beta }\overline{\delta }}^{%
\overline{\alpha }}+K_{\ \overline{\beta }\overline{\gamma }}^{\overline{%
\varphi }}K_{\ \overline{\varphi }\overline{\delta }}^{\overline{\alpha }%
}-K_{\ \overline{\beta }\overline{\delta }}^{\overline{\varphi }}K_{\
\overline{\varphi }\gamma }^{\overline{\alpha }}+K_{\ \overline{\beta }%
\overline{\varphi }}^{\overline{\alpha }}W_{\overline{\gamma }\overline{%
\delta }}^{\overline{\varphi }}, \\
\widehat{\mathbf{Z}}_{\ \overline{\beta }\overline{\gamma }} &=&\widehat{%
\mathbf{Z}}_{\ \overline{\beta }\overline{\gamma }\overline{\alpha }}^{%
\overline{\alpha }}=\mathbf{e}_{\overline{\alpha }}\widehat{\mathbf{Z}}_{\
\overline{\beta }\overline{\gamma }}^{\overline{\alpha }}-\mathbf{e}_{%
\overline{\gamma }}\ \widehat{\mathbf{Z}}_{\ \overline{\beta }\overline{%
\alpha }}^{\overline{\alpha }}+\widehat{\mathbf{Z}}_{\ \overline{\beta }%
\overline{\gamma }}^{\overline{\varphi }}\ \widehat{\mathbf{Z}}_{\ \overline{%
\varphi }\overline{\alpha }}^{\overline{\alpha }}-\widehat{\mathbf{Z}}_{\
\overline{\beta }\overline{\alpha }}^{\overline{\varphi }}\ \widehat{\mathbf{%
Z}}_{\ \overline{\varphi }\overline{\gamma }}^{\overline{\alpha }}+ \\
&&\widehat{\mathbf{\Gamma }}_{\ \overline{\beta }\overline{\gamma }}^{%
\overline{\varphi }}\ \widehat{\mathbf{Z}}_{\ \overline{\varphi }\overline{%
\alpha }}^{\overline{\alpha }}-\widehat{\mathbf{\Gamma }}_{\ \overline{\beta
}\overline{\alpha }}^{\overline{\varphi }}\ \widehat{\mathbf{Z}}_{\
\overline{\varphi }\overline{\gamma }}^{\overline{\alpha }}+\widehat{\mathbf{%
Z}}_{\ \overline{\beta }\overline{\gamma }}^{\overline{\varphi }}\widehat{%
\mathbf{\Gamma }}_{\ \overline{\varphi }\overline{\alpha }}^{\overline{%
\alpha }}-\widehat{\mathbf{Z}}_{\ \overline{\beta }\overline{\alpha }}^{%
\overline{\varphi }}\ \widehat{\mathbf{\Gamma }}_{\ \overline{\varphi }%
\overline{\gamma }}^{\overline{\alpha }}+\widehat{\mathbf{Z}}_{\ \overline{%
\beta }\overline{\varphi }}^{\overline{\alpha }}W_{\overline{\gamma }%
\overline{\alpha }}^{\overline{\varphi }}.
\end{eqnarray*}%
Such values are defined with respect to N--adapted bases (\ref{dderalg}) and
(\ref{ddifalg}). Using fame transforms (\ref{ft}) and their dual ones
computed as inverse matrices $(\mathbf{e}_{\overline{\alpha }}^{\ \overline{%
\alpha }^{\prime }})^{-1}$, we can re--define the coefficients with respect
to coordinate bases. Coordinate formulas are important in the theory of
Ricci flows (allowing simplified proofs of a number of important results on
geometric evolution) and for constructing, in explicit form, exact solutions
in geometric mechanics and analogous gravity.

Finally, we note that all values on prolongation Lie algebroids are uniquely
determined by a d--metric $\mathbf{g}_{\overline{\alpha }\overline{\beta }}$
(\ref{dm}) (equivalently, by a generic off--diagonal $g_{\overline{\alpha }%
\overline{\beta }}$ (\ref{offd})) for a prescribed $\mathcal{N}_{\alpha
}^{B} $ (\ref{nonalg}). Elaborating a physical dynamical/ evolution model
for $\left( \overline{\mathbf{g}},\overline{\nabla }\right),$ the same
theory can be described in terms of data $\left( \overline{\mathbf{g}},%
\widehat{\mathcal{D}},\mathcal{N}\right).$ This property allows us to
simplify, for instance, the proofs of main results for Ricci flows on Lie
algebroids using similar results for (pseudo) Riemannian metrics and then
nonholonomically transforming the constructions into evolution of N--adapted
values.

\subsection{An extension of Kern--Matsumoto   approach   for  algebroid mechanics \& gravity}

Let us consider an alternative (second) approach to geometrization of
regular Lagrange mechanics \cite{kern,matsumoto} on Lie algebroids. All
constructions are described in terms of generalized metrics, adapted frames
and N- and d--connections. This is different from Cartan variables (\ref%
{cartvar}) and equations (\ref{geomeq1}) considered in section \ref{s2}.%
\footnote{%
To develop such ideas for mathematical relativity and geometric mechanics
and noncommutative and Lie algebroid modifications/ generalizations was
proposed in some our proposals for Projects and Marie Curies Fellowships in
2003--2004, see also papers \cite{valg1,valg2} and references therein.}

The goal of this section is to show that it is possible a setting when
canonical N-- and d--connections and d--metric on $\mathcal{T}^{E}\mathbf{E,}
$ for $\mathbf{P=E},$ considered in previous section, are derived from a
regular Lagrangian as a solution of Euler--Lagrange equations (\ref{eleq}).

\begin{lemma}
\label{lemsemispr}There is a N--connection $~^{q}\mathcal{N}:=-\mathcal{L}%
_{q}S$ defined by a semi--spray $q=y^{\alpha }\mathcal{X}_{\alpha
}+q^{\alpha }\mathcal{V}_{\alpha }$ and Lie derivative $\mathcal{L}_{q}$
acting on any $X\in Sec(TE)$ following formula $~^{q}\mathcal{N}%
(X)=-\left\lfloor q,SX\right\rfloor ^{\pi }+S\left\lfloor q,X\right\rfloor
^{\pi }.$
\end{lemma}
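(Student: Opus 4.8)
The plan is to show that $\Gamma:=\,^{q}\mathcal{N}=-\mathcal{L}_{q}S$ is an almost product structure on $\mathcal{T}^{E}E$ whose $(-1)$--eigendistribution coincides with the canonical vertical subbundle $v\mathcal{T}^{E}E=\mathrm{Im}(S)$; by the characterization of an N--connection through a morphism with $\mathcal{N}^{2}=id$ in (\ref{nonalg}), this is exactly what must be checked. First I would verify that $\Gamma$ is a genuine bundle endomorphism, i.e. $C^{\infty}(E)$--linear in $X$. Replacing $X\to fX$ and using the derivation rule $\lfloor q,fY\rfloor^{\pi}=f\lfloor q,Y\rfloor^{\pi}+\rho^{\pi}(q)(f)\,Y$ together with the $C^{\infty}$--linearity of $S$, the two anchor terms $\rho^{\pi}(q)(f)\,SX$ arising from $\lfloor q,S(fX)\rfloor^{\pi}$ and from $S\lfloor q,fX\rfloor^{\pi}$ cancel, leaving $\Gamma(fX)=f\,\Gamma(X)$. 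Thus $\Gamma=-\mathcal{L}_{q}S$ is the Lie derivative of the $(1,1)$--tensor $S$ and is itself a tensor of the same type.

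Next I would record the two structural inputs and derive the key identities. From $S\mathcal{X}_{\alpha}=\mathcal{V}_{\alpha}$ and $S\mathcal{V}_{\alpha}=0$ one has $S^{2}=0$ with $\ker S=\mathrm{Im}\,S=v\mathcal{T}^{E}E$, while the form $q=y^{\alpha}\mathcal{X}_{\alpha}+q^{\alpha}\mathcal{V}_{\alpha}$ gives the semispray condition $Sq=y^{\alpha}\mathcal{V}_{\alpha}=\bigtriangleup$. Expanding $\Gamma=-(\mathcal{L}_{q}S)$ and using $S^{2}=0$ yields at once $S\Gamma=-\Gamma S$, since $S\Gamma(X)=-S\lfloor q,SX\rfloor^{\pi}$ and $\Gamma S(X)=S\lfloor q,SX\rfloor^{\pi}$. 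The decisive step is to show $\Gamma=-id$ on $\mathrm{Im}(S)$: for $SY$ one computes $\Gamma(SY)=S\lfloor q,SY\rfloor^{\pi}$, and the semispray form of $q$ forces this to equal $-SY$; concretely $\lfloor y^{\alpha}\mathcal{X}_{\alpha},\mathcal{V}_{\beta}\rfloor^{\pi}=-\mathcal{X}_{\beta}$, and applying $S$ returns $-\mathcal{V}_{\beta}=-S\mathcal{X}_{\beta}$, the $q^{\alpha}\mathcal{V}_{\alpha}$ part of $q$ contributing only to $\ker S$.

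Finally I would assemble these. From $S\Gamma=-\Gamma S$ and $\Gamma|_{\mathrm{Im}(S)}=-id$ it follows that $S\Gamma=S$, hence $\mathrm{Im}(\Gamma-id)\subseteq\ker S=\mathrm{Im}(S)$. Then for any $X$ one writes $\Gamma^{2}X=\Gamma X+\Gamma\big((\Gamma-id)X\big)$ and, since $(\Gamma-id)X\in\mathrm{Im}(S)$, applies $\Gamma|_{\mathrm{Im}(S)}=-id$ to obtain $\Gamma\big((\Gamma-id)X\big)=-(\Gamma-id)X$, whence $\Gamma^{2}X=X$. This proves $\Gamma^{2}=id$. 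The same two identities give that the $(-1)$--eigenspace equals $\mathrm{Im}(S)=v\mathcal{T}^{E}E$, so $\Gamma$ induces the splitting (\ref{nonalg}) with projectors $\tfrac{1}{2}(id\pm\Gamma)$; reading off the $+1$--eigenbasis in the form $\delta_{\beta}=\mathcal{X}_{\beta}-\mathcal{N}_{\beta}^{\gamma}\mathcal{V}_{\gamma}$ identifies the coefficients $\mathcal{N}_{\beta}^{\gamma}=-\tfrac{1}{2}\big(\partial_{\beta}q^{\gamma}+y^{\alpha}C_{\alpha\beta}^{\gamma}\big)$.

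The main obstacle is the identity $\Gamma|_{\mathrm{Im}(S)}=-id$, which is the only place the semispray hypothesis $Sq=\bigtriangleup$ is genuinely used — for an arbitrary section $q$ the operator $-\mathcal{L}_{q}S$ need not square to the identity. A secondary point special to the algebroid setting is that the nonvanishing structure functions $C_{\alpha\beta}^{\gamma}$ enter $\lfloor\mathcal{X}_{\alpha},\mathcal{X}_{\beta}\rfloor^{\pi}$; one must check they do not obstruct $\Gamma^{2}=id$, and indeed the computation above shows they are absorbed into $\mathcal{N}_{\beta}^{\gamma}$ and cancel in $\Gamma^{2}$. An intrinsic version of the key step would invoke the vanishing Nijenhuis tensor of $S$ together with $Sq=\bigtriangleup$, paralleling the classical tangent--bundle argument of \cite{martinez1,dleon1,cortes}.
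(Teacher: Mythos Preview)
Your argument is correct and takes a genuinely different route from the paper. The paper's proof is a bare local computation: it evaluates $^{q}\mathcal{N}$ on the basis sections $\mathcal{X}_{\alpha}$ and $\mathcal{V}_{\alpha}$, obtaining $^{q}\mathcal{N}(\mathcal{X}_{\alpha})=\mathcal{X}_{\alpha}+(\partial_{\alpha}q^{\beta}+y^{\gamma}C_{\gamma\alpha}^{\beta})\mathcal{V}_{\beta}$ and $^{q}\mathcal{N}(\mathcal{V}_{\alpha})=-\mathcal{V}_{\alpha}$ directly, and then simply reads off the coefficients $\mathcal{N}_{\alpha}^{\gamma}$ from the ansatz $^{q}\mathcal{N}(\mathcal{X}_{\alpha})=\mathcal{X}_{\alpha}-2\,^{q}\mathcal{N}_{\alpha}^{\gamma}\mathcal{V}_{\gamma}$; the property $\Gamma^{2}=id$ is never isolated as a separate step. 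Your approach is structural: you first establish tensoriality, then derive the invariant identities $S\Gamma=-\Gamma S$ and $\Gamma|_{\mathrm{Im}\,S}=-id$ from $S^{2}=0$ and the semispray condition $Sq=\bigtriangleup$, and deduce $\Gamma^{2}=id$ and the eigenspace identification abstractly before descending to coordinates. What your route buys is a clear separation of the coordinate--free content (an almost product structure with prescribed vertical $(-1)$--eigenbundle) from the local formula, and it makes explicit exactly where the semispray hypothesis enters; the paper's computation is shorter and yields the coefficients immediately, but leaves the reader to infer that the resulting operator really squares to the identity. Both approaches land on the same $\mathcal{N}_{\beta}^{\gamma}=-\tfrac{1}{2}(\partial_{\beta}q^{\gamma}+y^{\alpha}C_{\alpha\beta}^{\gamma})$.
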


\begin{proof}
Consider the operators $S$ and $\bigtriangleup $ from (\ref{form1}) defining
the semi--spray via formula \ $Sq=\bigtriangleup .$ For local coordinates $%
(x^{i},u^{A})\rightarrow (x^{i},y^{\alpha })$ and $X=\mathcal{X}_{\alpha }$,
$\ $we compute $^{q}\mathcal{N}(\mathcal{X}_{\alpha })=-\left\lfloor q,S(%
\mathcal{X}_{\alpha })\right\rfloor ^{\pi }+S\left\lfloor q,\mathcal{X}%
_{\alpha }\right\rfloor ^{\pi }=\mathcal{X}_{\alpha }+(\partial _{\alpha
}q^{\beta }+y^{\gamma }C_{\gamma \alpha }^{\beta })\mathcal{V}_{\beta }$.
Using $~^{q}\mathcal{N}(\mathcal{V}_{\alpha })=-\mathcal{V}_{\alpha }$ and $%
~^{q}\mathcal{N}(\mathcal{X}_{\alpha })=\mathcal{X}_{\alpha }-2~^{q}\mathcal{%
N}_{\alpha }^{\gamma }(x,y)\mathcal{V}_{\gamma },$ we define the
N--connection coefficients $\mathcal{N}_{\alpha }^{\gamma }=-\frac{1}{2}%
\left( \partial _{\alpha }q^{\gamma }+y^{\beta }C_{\beta \alpha }^{\gamma }\right) $, see formulas (\ref{nonlalg}). $\square $
\end{proof}
\vskip5pt

To generate N--connections, we can use sections $\Gamma _{L}=y^{\alpha }%
\mathcal{X}_{\alpha }+\varphi ^{\alpha }\mathcal{V}_{\alpha },$ with $%
q^{\varepsilon }=\varphi ^{\varepsilon }(x^{i},y^{\beta })$ (\ref{semispray}).
\begin{theorem}
Any regular Lagrangian $L\in C^{\infty }(E)$ defines a canonical
N--connection on prolongation Lie algebroid $\mathcal{T}^{E}\mathbf{E}$,
\begin{equation}
\widetilde{\mathcal{N}}=~^{\varphi }\mathcal{N}=\{\widetilde{\mathcal{N}}%
_{\alpha }^{\gamma }=-\frac{1}{2}\left( \partial _{\alpha }\varphi ^{\gamma
}+y^{\beta }C_{\beta \alpha }^{\gamma }\right) \},  \label{canonnc}
\end{equation}
determined by semi--spray configurations encoding the solutions of
Euler--Lagrange equations (\ref{eleq}).
\end{theorem}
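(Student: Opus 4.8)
The plan is to specialize Lemma~\ref{lemsemispr} to the canonical semi--spray $\Gamma_L$ produced by the Euler--Lagrange dynamics of $L$. First I would invoke the regularity hypothesis: by (\ref{hessian}) the Hessian $\varpi_{\alpha\beta}=\partial^2 L/\partial y^\alpha\partial y^\beta$ is nondegenerate, so its inverse $\varpi^{\alpha\beta}$ is well defined and the Euler--Lagrange section admits the \emph{unique} solution $\Gamma_L=y^\alpha\mathcal{X}_\alpha+\varphi^\alpha\mathcal{V}_\alpha$ with $\varphi^\varepsilon$ given explicitly by (\ref{semispray}). Next I would recall, from the discussion following (\ref{geomeq1}), that regularity forces $S(\Gamma_L)=\bigtriangleup$ (equivalently $\mathcal{T}\tau(\Gamma_L(a))=a$), i.e. $\Gamma_L$ is a SODE/semi--spray in precisely the sense required by Lemma~\ref{lemsemispr}. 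Thus $\Gamma_L$ qualifies as an admissible $q$ with $q^\alpha=\varphi^\alpha$.

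The core of the argument is then a direct application of Lemma~\ref{lemsemispr} with $q:=\Gamma_L$. The lemma produces the N--connection $\,{}^{q}\mathcal{N}=-\mathcal{L}_q S$ whose coefficients, in the adapted basis $\{\mathcal{X}_\alpha,\mathcal{V}_\alpha\}$ and in the sense of (\ref{nonlalg}), are $\mathcal{N}_\alpha^\gamma=-\tfrac12(\partial_\alpha q^\gamma+y^\beta C_{\beta\alpha}^\gamma)$. Substituting $q^\gamma=\varphi^\gamma$ reproduces verbatim the asserted coefficients (\ref{canonnc}), so I would set $\widetilde{\mathcal{N}}:={}^{\varphi}\mathcal{N}$. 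For the \emph{canonical} claim I would observe that $\varphi^\gamma$ in (\ref{semispray}) is assembled solely from $L$ (through $\rho_\beta^i\partial L/\partial x^i$, the Hessian $\varpi_{\alpha\beta}$ and its mixed derivatives) together with the fixed Lie algebroid structure functions $(\rho_\alpha^i,C_{\alpha\beta}^\gamma)$; hence $\widetilde{\mathcal{N}}$ depends only on the regular Lagrangian $L$ and on $\mathcal{E}$, with no auxiliary choices, and by construction its integral dynamics encode the solutions of (\ref{eleq}).

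The step I expect to carry the real content is not the substitution but the verification that $\widetilde{\mathcal{N}}$ is a genuine \emph{global} N--connection, i.e. an almost--product structure with $\mathcal{N}^2=\mathrm{id}$ on the splitting (\ref{nonalg}), rather than merely a local coordinate expression. Fortunately this is exactly what Lemma~\ref{lemsemispr} already secures, since ${}^{q}\mathcal{N}=-\mathcal{L}_q S$ is an intrinsically defined operator on $Sec(TE)$ built from the vertical endomorphism $S$ (with $S^2=0$, $S\mathcal{X}_\alpha=\mathcal{V}_\alpha$, $S\mathcal{V}_\alpha=0$ by (\ref{form1})) and from the tensorial semi--spray $\Gamma_L$; the relations ${}^{q}\mathcal{N}(\mathcal{V}_\alpha)=-\mathcal{V}_\alpha$ and ${}^{q}\mathcal{N}(\mathcal{X}_\alpha)=\mathcal{X}_\alpha-2\,\widetilde{\mathcal{N}}_\alpha^\gamma\mathcal{V}_\gamma$ established there already exhibit the $h$--$v$ splitting and the involutivity $\mathcal{N}^2=\mathrm{id}$. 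Consequently the only residual check is that the local coefficients $\widetilde{\mathcal{N}}_\alpha^\gamma$ transform as an N--connection under admissible changes of the adapted frame, and this is inherited from the frame--independent definition $-\mathcal{L}_{\Gamma_L}S$ together with the tensorial character of $\Gamma_L$. This completes the reduction of the theorem to Lemma~\ref{lemsemispr} and formula (\ref{semispray}).
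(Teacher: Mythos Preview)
Your proposal is correct and follows essentially the same approach as the paper: the paper's own proof is the single line ``It is a straightforward consequence of above Lemma and (\ref{semispray}),'' which is precisely your strategy of specializing Lemma~\ref{lemsemispr} to the Euler--Lagrange semi--spray $\Gamma_L$ with $q^\gamma=\varphi^\gamma$. Your additional checks on regularity, the SODE property, and the global/canonical character are welcome elaborations but do not constitute a different route.
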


\begin{proof}
It is a straightforward consequence of above Lemma and (\ref{semispray}).$%
\square $
\end{proof}

\vskip5pt

The geometric data and dynamics of symplectic equations (\ref{geomeq1}) for
Cartan variables (\ref{cartvar}) can be encoded equivalently into a metric
compatible geometry on prolongation of Lie algebroid.

\begin{corollary}
A model of Lie algebroid geometry $(L:\widetilde{\mathcal{N}},\widetilde{%
\mathbf{g}},\widehat{\mathcal{D}})$ on $\mathcal{T}^{E}\mathbf{E,}$ for $%
\pi: E\rightarrow M,$ with prescribed algebroid structure functions $\rho
_{\alpha }^{i}(x^{k})$ and $C_{\beta \alpha }^{\gamma }(x^{k})$, is
canonically determined by a regular Lagrangian $L\in C^{\infty }(E)$.
\end{corollary}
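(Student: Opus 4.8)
The plan is to assemble the triple $(\widetilde{\mathcal{N}},\widetilde{\mathbf{g}},\widehat{\mathcal{D}})$ from the data of $L$ in three successive steps, each invoking a result established above, and then to argue that the assembly is genuinely canonical. First I would recall that the N--connection $\widetilde{\mathcal{N}}$ of (\ref{canonnc}) is already produced by the preceding Theorem: the semi--spray coefficients $\varphi^{\gamma}$ of (\ref{semispray}) are uniquely fixed by $L$ through the inverse Hessian together with the prescribed structure functions $\rho_{\alpha}^{i},C_{\beta\alpha}^{\gamma}$, and they determine $\widetilde{\mathcal{N}}_{\alpha}^{\gamma}$. This fixes the $h$--$v$--splitting (\ref{nonalg}) of $\mathcal{T}^{E}\mathbf{E}$ together with the N--adapted frames (\ref{dderalg}) and (\ref{ddifalg}).

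Second I would build the canonical d--metric from the Hessian. Since $L$ is regular, the matrix $\varpi_{\alpha\beta}=\partial^{2}L/\partial y^{\alpha}\partial y^{\beta}$ of (\ref{hessian}) is nondegenerate and serves as the fiber metric; using the Sasaki--type lift one takes the same tensor on the $h$-- and $v$--distributions, $\widetilde{\mathbf{g}}=\varpi_{\alpha\beta}\,\mathcal{X}^{\alpha}\otimes\mathcal{X}^{\beta}+\varpi_{\alpha\beta}\,\delta^{\alpha}\otimes\delta^{\beta}$, which is exactly of the d--metric form (\ref{dm}) with $\mathbf{g}_{\alpha\beta}=\mathbf{g}_{AB}=\varpi_{\alpha\beta}$ under the index identification $\overline{\alpha}=(\alpha,\alpha)$ valid for $\mathbf{P=E}$. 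The essential point is that the coframe $\delta^{\alpha}$ already carries $\widetilde{\mathcal{N}}$ from the first step, so $\widetilde{\mathbf{g}}$ is genuinely N--adapted rather than merely a pair of fiber metrics. Third I would feed the pair $(\widetilde{\mathbf{g}},\widetilde{\mathcal{N}})$ into the canonical d--connection Theorem: formula (\ref{candcon}) then returns the unique metric compatible d--connection $\widehat{\mathcal{D}}$ with the prescribed torsion structure, its coefficients being read off directly from those of $\widetilde{\mathbf{g}}$ and $\widetilde{\mathcal{N}}$. At this stage the triple $(\widetilde{\mathcal{N}},\widetilde{\mathbf{g}},\widehat{\mathcal{D}})$ is completely determined by $L$ and the structure functions, which is the content of the statement.

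The hard part, and the step that actually earns the word ``canonical'', is to verify that this metric geometrization is equivalent to the Cartan symplectic data (\ref{cartvar})--(\ref{geomeq1}) rather than just some geometry attached to $L$. The check I would carry out is that the almost--symplectic form built from $(\widetilde{\mathbf{g}},\widetilde{\mathcal{N}})$ coincides with the Cartan $2$--section $\omega_{L}$ of (\ref{form1}); concretely one matches the off--diagonal $h$--$v$--coupling produced by the Sasaki lift against the mixed derivatives $\partial^{2}L/\partial x^{i}\partial y^{\alpha}$ and the term $C_{\alpha\beta}^{\gamma}\partial L/\partial y^{\gamma}$ appearing in $\omega_{L}$. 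Because both $\omega_{L}$ and $\widetilde{\mathcal{N}}$ are manufactured from the same Hessian and semi--spray, this reduces to confirming that $\widetilde{\mathcal{N}}_{\alpha}^{\gamma}$ reproduces the derivatives of $\partial L/\partial y^{\gamma}$ entering $\omega_{L}$, a routine but indispensable computation establishing the equivalence asserted in the corollary.
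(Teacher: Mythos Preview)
Your first two paragraphs reproduce the paper's proof essentially verbatim: construct $\widetilde{\mathcal{N}}$ from the preceding Theorem via the semi--spray (\ref{semispray}), build the Sasaki--type d--metric $\widetilde{\mathbf{g}}$ from the Hessian (\ref{hessian}), and then feed both into the canonical d--connection formula (\ref{candcon}) to obtain $\widehat{\mathcal{D}}$. That is all the paper does.

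Your third paragraph, however, goes beyond what the paper actually proves. The paper's argument stops once the triple is assembled; it does \emph{not} verify that the induced almost--symplectic form coincides with the Cartan $2$--section $\omega_{L}$ of (\ref{form1}). In the paper, ``canonically determined'' means only that each step of the construction involves no arbitrary choice once $L$ and the structure functions $\rho_{\alpha}^{i},C_{\beta\alpha}^{\gamma}$ are fixed --- not that the resulting metric model has been shown equivalent to the symplectic one from section~\ref{s2}. The equivalence you sketch is a reasonable (and standard, in the Kern--Matsumoto setting) supplementary check, but it is extra content rather than a required ingredient of the corollary as stated and proved in the paper.
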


\begin{proof}
It follows from such key steps in definition of fundamental geometric
objects. Using $L(x,y),$ we construct the canonical N--connection $%
\widetilde{\mathcal{N}}=\{\widetilde{\mathcal{N}}_{\alpha }^{\gamma }\}$ (%
\ref{canonnc}) and induced N--adapted frames (\ref{dderalg}) and (\ref%
{ddifalg}), respectively,
\begin{equation}
~\widetilde{\mathbf{e}}_{\overline{\alpha }}:=\{\delta _{\alpha }=\mathcal{X}%
_{\alpha }-\widetilde{\mathcal{N}}_{\alpha }^{\gamma }\mathcal{V}_{\gamma },%
\mathcal{V}_{\beta }\}\mbox{ \ and \ }\widetilde{\mathbf{e}}^{\overline{%
\beta }}:=\{\mathcal{X}^{\alpha },\delta ^{\beta }=\mathcal{V}^{\beta }+%
\widetilde{\mathcal{N}}_{\gamma }^{\beta }\mathcal{V}^{\gamma }\}.
\label{ddcanad}
\end{equation}%
At the next step, we construct a total metric of type (\ref{dm}), $\overline{%
\mathbf{g}}\rightarrow \widetilde{\mathbf{g}},$ as a Sasaki lift of Hessian $%
\varpi _{\alpha \beta }$ (\ref{hessian}), where
\begin{equation}
\widetilde{\mathbf{g}}:=\widetilde{\mathbf{g}}_{\overline{\alpha }\overline{%
\beta }}\mathbf{e}^{\overline{\beta }}\otimes \mathbf{e}^{\overline{\beta }%
}=\varpi _{\alpha \beta }\ \mathcal{X}^{\alpha }\otimes \mathcal{X}^{\beta
}+\ \varpi _{\alpha \beta }\ \delta ^{\alpha }\otimes \delta ^{\beta }.
\label{ldm}
\end{equation}%
Introducing the coefficients of d--metric (\ref{ldm}) into formulas (\ref%
{candcon}), we compute the coefficients of canonical $\widehat{\mathcal{D}}$%
, induced by $L.$ $\square $
\end{proof}

\vskip5pt

In general, we can use arbitrary frames of reference on $\mathcal{T}^{E}%
\mathbf{E,}$ when $\mathbf{e}_{\overline{\gamma }^{\prime }}=e_{\ \overline{%
\gamma }^{\prime }}^{\overline{\gamma }}\mathbf{\tilde{e}}_{\overline{\gamma
}}$ for any $\mathbf{\tilde{e}}_{\overline{\gamma }}$ (\ref{ddcanad}). Any
N--connection and/or metric structure $(\mathcal{N}\mathbf{,}\overline{%
\mathbf{g}})$ can be related to some canonical data $(\widetilde{\mathcal{N}}%
\mathbf{,}\widetilde{\mathbf{g}})$ determined by a regular Lagrangian $L\in
C^{\infty }(E),$ when $\overline{\mathbf{g}}_{\overline{\alpha }^{\prime }%
\overline{\beta }^{\prime }}=e_{\ \overline{\alpha }^{\prime }}^{\overline{%
\alpha }}e_{\ \overline{\beta }^{\prime }}^{\overline{\beta }}\mathbf{\tilde{%
g}}_{\overline{\alpha }\overline{\beta }}.$ It is necessary to solve an
algebraic quadratic system of equations in order to define $e_{\ \overline{%
\alpha }^{\prime }}^{\overline{\alpha }}$ from some prescribed data $%
\overline{\mathbf{g}}_{\overline{\alpha }^{\prime }\overline{\beta }^{\prime
}}$ and $\mathbf{\tilde{g}}_{\overline{\alpha }\overline{\beta }}.$

\begin{conclusion}
Via corresponding frame transforms and re--adapting nonholonomic
distributions on $\mathcal{T}^{E}\mathbf{E}$, we can model equivalently:

\begin{itemize}
\item any Lagrange mechanics determined by regular $L\in C^{\infty }(E)$ as
a Kern--Matsumoto model $(L:\widetilde{\mathcal{N}},\widetilde{\mathbf{g}},%
\widehat{\mathcal{D}})$ and corresponding Ricci tensor, $\widehat{\mathcal{R}%
}ic=\{\widehat{\mathbf{R}}_{\overline{\alpha }\overline{\beta }}\}$ (\ref%
{driccialg}), and scalar curvature $\ ^{s}\widehat{\mathbf{R}}$ (\ref{sdcurv}%
);

\item inversely, any off--diagonal metric $g_{\overline{\alpha }\overline{%
\beta }}$ (\ref{offd}) can be transformed via N--adapted frame transforms (%
\ref{ft}) into a d--metric $\mathbf{g}_{\overline{\alpha }\overline{\beta }}$
(\ref{dm}) (for a prescribed $L$, parametrized in a form $\widetilde{\mathbf{%
g}}$ (\ref{ldm})); we can model analogous gravity theories on algebroids as
effective Lagrange models.
\end{itemize}
\end{conclusion}

Following different approaches, algebroid models for analogous gravity and
matter field interactions are studied in Refs. \cite%
{vrflg,vqgr2,valg1,valg2,strobl1}. One of the most important problems for such theories is to provide a physical motivation for the type of linear
connection which should be chosen for constructing analogs of Einstein
equations of Lie algebroids and how solutions of gravitational filed
equations are related to the Euler--Lagrange equations in an effective
geometric mechanics. For  purposes of this paper, it is important to
consider the case of Einstein N--anholonomic spaces on Lie algebroid
prolongations defined by solutions of equations
\begin{equation}
\widehat{\mathbf{R}}_{\overline{\alpha }\overline{\beta }}=\lambda \mathbf{g}%
_{\overline{\alpha }\overline{\beta }}  \label{einstmeq}
\end{equation}%
where $\lambda =const.$ Such configurations are stationary ones in the
theory of Ricci flows and define the so--called Ricci solitons. In the case
of Lie algebroids determined by $\pi : E \rightarrow M,$ with local
coordinates $(x^{i},y^{\alpha }),$ the solutions for effective metrics
induced on sections are with Killing symmetries on $\partial /\partial
y^{\alpha },$ when Lie derivatives of $\mathbf{g}_{\overline{\alpha }%
\overline{\beta }}$ on such $y$--directions are zero, because all algebroid
geometric structures are defined as sections over $M$ (all structure
functions and coefficients of fundamental geometric objects depending on
local coordinates $x^{i}$).

\begin{claim}
Stationary (with respect to Ricci flow evolution of geometric structure on
Lie algebroids, see next section) effective Lagrange and/or analogous
gravitational models on a prolongation Lie algebroid $\mathcal{T}^{E}\mathbf{%
E}$ \ for $\pi :E\rightarrow M,$ $\dim E=n+m$ and $\dim M=n\geq 2,$ are
defined as Ricci soliton configurations (\ref{einstmeq}) with $m$ Killing
symmetries.
\end{claim}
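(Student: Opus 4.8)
The plan is to combine the Ricci--flow characterization of stationary points (made precise by the R. Hamilton equations to be derived in the next section) with the $M$--based nature of the algebroid data and the equivalence of representations established in the preceding Conclusion. First I would recall that, for the N--adapted flow driven by the canonical d--connection $\widehat{\mathcal{D}}$, a configuration is stationary along the evolution parameter exactly when it is a Ricci soliton, and that for the metric--compatible $\widehat{\mathcal{D}}$ (with $\widehat{\mathcal{D}}\overline{\mathbf{g}}=0$) the rigid soliton condition reduces to the Einstein--type system $\widehat{\mathbf{R}}_{\overline{\alpha}\overline{\beta}}=\lambda\mathbf{g}_{\overline{\alpha}\overline{\beta}}$ of (\ref{einstmeq}) with $\lambda=\mathrm{const}$. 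Thus the first half of the statement amounts to identifying stationary configurations with solutions of (\ref{einstmeq}).

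Next I would extract the $m$ Killing symmetries from the structure of the Lie algebroid. The only data entering the canonical d--connection (\ref{candcon}) and, through (\ref{dcurv})--(\ref{driccialg}), the Ricci tensor $\widehat{\mathbf{R}}_{\overline{\alpha}\overline{\beta}}$, are the d--metric $\mathbf{g}_{\overline{\alpha}\overline{\beta}}$ and the algebroid structure functions $\rho_\alpha^i(x^k),C_{\beta\alpha}^\gamma(x^k)$, all of which are sections over $M$ and depend only on the base coordinates $x^i$. Hence the Einstein system (\ref{einstmeq}) is consistent on the subclass of d--metrics $\mathbf{g}_{\overline{\alpha}\overline{\beta}}=\mathbf{g}_{\overline{\alpha}\overline{\beta}}(x)$: for such fields every vertical derivative $\mathcal{V}_C$ of a metric or connection coefficient vanishes, and the fiber rank being $\dim E-\dim M=m$ one obtains the $m$ independent vertical vector fields $\partial/\partial y^\alpha$, $\alpha=1,\dots,m$, satisfying $\mathcal{L}_{\partial/\partial y^\alpha}\mathbf{g}_{\overline{\alpha}\overline{\beta}}=0$ and $\mathcal{L}_{\partial/\partial y^\alpha}\widehat{\mathcal{D}}=0$. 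These are the asserted $m$ Killing (and affine) symmetries.

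Finally I would transfer the statement between the analogous--gravity and the effective--Lagrange pictures using the preceding Conclusion: a regular $L\in C^\infty(E)$ produces the canonical data $(L:\widetilde{\mathcal{N}},\widetilde{\mathbf{g}},\widehat{\mathcal{D}})$ through (\ref{canonnc}) and the Sasaki--type lift (\ref{ldm}), while any generic off--diagonal metric is brought by N--adapted frame transforms to the d--metric form (\ref{dm}). Since both directions preserve $\widehat{\mathbf{R}}_{\overline{\alpha}\overline{\beta}}$ and $\ ^s\widehat{\mathbf{R}}$ up to frame transforms, a stationary solution in one representation is stationary in the other. Combining the three steps yields that the stationary effective--Lagrange/analogous--gravity models on $\mathcal{T}^E\mathbf{E}$ are exactly the Ricci soliton configurations (\ref{einstmeq}) carrying $m$ Killing symmetries.

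I expect the main obstacle to be the reconciliation of the generically $y$--dependent Kern--Matsumoto Hessian metric (\ref{ldm}), $\widetilde{\mathbf{g}}\sim\varpi_{\alpha\beta}=\partial^2L/\partial y^\alpha\partial y^\beta$, with the claimed vertical Killing symmetry: for an arbitrary Lagrangian the fiber translations are \emph{not} isometries, so one must argue that it is precisely the stationary/soliton condition, together with the fact that all fundamental algebroid objects live over $M$, that selects the $x$--only dependent representatives for which $\partial/\partial y^\alpha$ become Killing. Making this selection genuinely rigorous, rather than merely consistent, is the delicate point, and it is exactly where the distinction between the prolongation--algebroid setting and the tangent--bundle Lagrange case (where the vertical symmetry is truly broken) must be invoked.
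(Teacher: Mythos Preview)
Your proposal is more ambitious than what the paper actually does. Look carefully at the text immediately following the Claim: the author explicitly states that ``for certain classes of smooth functions, such a Claim can be proven using theorems on decoupling and integration of the Einstein--Yang--Mills--Higgs equations'' (with citations to external work), and then immediately concedes that ``this Claim can not be proven for all possible types of Lie algebroid configurations.'' The paper treats the Claim as a physically motivated assertion supported by experience with explicit solutions, not as a theorem with a self-contained proof. There is no proof in the paper for you to match.

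The obstacle you flag in your final paragraph is exactly the reason the paper refrains from a general proof. The Hessian metric $\varpi_{\alpha\beta}(x,y)$ genuinely depends on $y$ for a nontrivial regular Lagrangian, so the vertical fields $\partial/\partial y^{\alpha}$ are \emph{not} Killing for $\widetilde{\mathbf{g}}$ in general, and the soliton equation (\ref{einstmeq}) alone does not force the $y$--dependence to disappear. Your second step --- arguing that ``all data depend only on $x$'' --- conflates the algebroid structure functions $\rho^i_\alpha(x),\,C^\gamma_{\alpha\beta}(x)$ (which do live over $M$) with the d--metric and N--connection coefficients (which, being built from $L(x,y)$, do not). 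So the consistency argument you sketch does not upgrade to a selection principle, and you correctly sense this. The honest resolution is the one the paper adopts: restrict to those configurations for which decoupling/integration techniques produce metrics with the required $m$ Killing directions, and present the Claim as a working hypothesis rather than a proven statement.
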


For certain classes of smooth functions, such a Claim can be proven using
theorems on decoupling and integration of the Einstein--Yang--Mills--Higgs
equations, \cite{vrflg,veymheq}. Nevertheless, this Claim can not be proven
for all possible types of Lie algebroid configurations. The gravitational
and matter field equations on different curved spaces, including
constructions with Lie algebroids, are very sophisticate nonlinear systems
of partial differential equations. In general, such systems may have various
stochastic, fractional, chaos etc properties. This gives us a reason to
argue that following our experience a chosen class of Cauchy type and/or
stochastic etc flows can be modelled by a corresponding effective Lagrange
dynamics/ evolution of Lie algebroid configurations. We can not prove that
all physically important cases can be described via such models and it is
not possible to state some uniqueness criteria, completeness of solutions
etc.

\section{Lagrange--Ricci Evolution and Lie Algebroids}

\label{s4} Following Kern--Matsumoto geometrization of regular Lagrange
mechanics and analogous gravity models on Lie algebroids, we can consider
the problem of geometric flow evolution of such system as an explicit
example of a theory of Ricci flows on nonholonomic manifolds as we stated in
Refs. \cite{vricci1,vricci2,vncricci}. The goal of this section is to prove
that Lagrange--Ricci flows on $\mathcal{T}^{E}\mathbf{E}$ can be encoded
into a model of gradient nonholonomic flows.

We can formulate an evolution model for a family of geometric data
$\left(\overline{\mathbf{g}}(\tau ),\overline{\nabla }(\tau )\right) $ on
$\mathcal{T}^{E}\mathbf{E}$ induced by a family of regular
$L(\tau )\in C^{\infty }(E)$
with a flow parameter $\tau \in \lbrack -\epsilon ,\epsilon ]\subset \mathbb{%
R},$ when $\epsilon >0$ is taken sufficiently small. Let us introduce on the space of $Sec(E),$ for $\pi :E\rightarrow M,$ $\dim E=n+m$ and $\dim M=n\geq 2,$ the functionals
\begin{eqnarray}
\ _{\shortmid }\mathcal{F}(\overline{\mathbf{g}},\overline{\nabla },f,\tau)
&=&\int_{\overline{\mathcal{V}}}\left( \ _{\shortmid }R+\left| \overline{%
\nabla }f\right| ^{2}\right) e^{-f}\ dV,  \label{2pfrs} \\
\ _{\shortmid }\mathcal{W}(\overline{\mathbf{g}},\overline{\nabla },f,\tau)
&=&\int_{\overline{\mathcal{V}}}\left[ \tau \left( \ _{\shortmid }R+\left|
\overline{\nabla }f\right| \right) ^{2}+f-2m)\right] \mu \ dV,  \notag
\end{eqnarray}%
where the volume form $dV$ and scalar curvature $\ _{\shortmid }R$ are
determined by an off--diagonal metric $g_{\overline{\alpha }\overline{\beta }%
}$ (\ref{offd}). The integration is taken over $\overline{\mathcal{V}}%
\subset \mathcal{T}^{E}\mathbf{E},\dim \mathcal{V}=2m,$ corresponding to sections over a $U\subset M.$ We can fix $\int_{\overline{\mathcal{V}}}dV=1,$
with $\mu =\left( 4\pi \tau \right) ^{-m}e^{-f},$ considering necessary
classes of frame transforms and a parameter $\tau >0.$ The Ricci flow
evolution derived from (\ref{2pfrs}) in variables $\left( \overline{\mathbf{g%
}},\overline{\nabla }\right) $ is a standard theory for Riemann metrics \cite%
{ham1,ham2,gper1,gper2,gper3} but restricted to the conditions that such
metrics are induced by regular Lagrangians. The evolution in such variables is not adapted to a N--connection structure (\ref{nonalg}). It is possible
to elaborate N--adapted scenarios if above Perelman's functionals are
re--defined in terms of geometric data $(\widetilde{\mathbf{g}},\widehat{%
\mathcal{D}})$ and the derived flow equations are considered in N--adapted
variables.  Both approaches are equivalent if the distortion relations $%
\overline{\nabla }=\widehat{\mathcal{D}}+\widehat{\mathbf{Z}}$ (\ref%
{distrel1}) are considered for the same family of metrics, $\overline{%
\mathbf{g}}(\tau )=\widetilde{\mathbf{g}}(\tau )$ computed for the same set $%
L(\tau ).$

The theory of Lagrange--Ricci flows on $\mathcal{T}^{E}\mathbf{E}$ is
formulated as a model of evolving nonholonomic dynamical systems on the
space of equivalent geometric data $\left( L:\overline{\mathbf{g}},\overline{%
\nabla }\right) $ and/or $(L:\widetilde{\mathbf{g}},\widehat{\mathcal{D}})$
when the functionals $\ _{\shortmid }\mathcal{F}$ and $\ _{\shortmid }%
\mathcal{W}$ are postulated to be of Lyapunov type. Ricci flat
configurations (the Ricci tensor can be computed for one of the connections $%
\overline{\nabla }$ or $\widehat{\mathcal{D}})$ are defined as ''fixed'' on $%
\tau $ points of the corresponding dynamical systems.

We use $\breve{\tau}=\ ^{h}\tau =\ ^{v}\tau $ \ for a couple of possible $h$%
-- and $v$--flows parameters, $\breve{\tau}=(\ ^{h}\tau ,\ ^{v}\tau ),$ and
introduce a new function $\breve{f}$ instead of\ $\ f.$ The scalar functions
are re--defined in such a form that the ''sub--integral'' formula (\ref%
{2pfrs}) under the distortion of Ricci tensor (\ref{driccidist}) is
re--written in terms of geometric objects derived for the canonical
d--connection,
\begin{equation*}
(\ _{\shortmid }R+\left| \overline{\nabla }f\right| ^{2})e^{-f}=(\
_{s}^{F}R+|\ ^{F}\mathbf{D}\breve{f}|^{2})e^{-\breve{f}}\ +\Phi.
\end{equation*}
For the second functional, $\mathcal{D=(}h\mathcal{D},v\mathcal{D}),$ we
re--scale $\tau \rightarrow \breve{\tau}$ and write
\begin{equation*}
\left[ \tau (\ _{\shortmid }R+\left| \overline{\nabla }f\right| )^{2}+f-2m%
\right] \mu =[\breve{\tau}(~^{s}\widehat{\mathbf{R}}+|h\mathcal{D}\breve{f}%
|+|\ v\mathcal{D}\breve{f}|)^{2}+\breve{f}-2m]\breve{\mu}+\Phi _{1},
\end{equation*}%
for some $\Phi $ and $\Phi _{1}$ for which $\int_{\overline{\mathcal{V}}%
}\Phi dV=0$ and $\int_{\overline{\mathcal{V}}}\Phi _{1}dV=0.$ This provides
a proof for\footnote{%
similar N--adapted constructions were considered in Claim 3.1 in Ref. \cite%
{vricci1,vricci2} for nonholonomic manifolds and Lagrange spaces on $TM$}

\begin{lemma}
Considering distortion relations for scalar curvature and Ricci tensor \
determined by $\overline{\nabla }=\widehat{\mathcal{D}}+\widehat{\mathbf{Z}}$
(\ref{distrel1}), the Perelman's functionals (\ref{2pfrs}) are defined
equivalently in N--adapted variables $(L:\widetilde{\mathbf{g}},\widehat{%
\mathcal{D}}),$ {\small
\begin{eqnarray}
\mathcal{F}(\widetilde{\mathbf{g}},\widehat{\mathcal{D}},\breve{f}) &=&\int_{%
\overline{\mathcal{V}}}(~^{s}\widehat{\mathbf{R}}+|h\mathcal{D}\breve{f}%
|^{2}+|\ v\mathcal{D}\breve{f}|)^{2})e^{-\breve{f}}\ dV,  \label{2npf1} \\
\ \mathcal{W}(\widetilde{\mathbf{g}},\widehat{\mathcal{D}},\breve{f},\breve{%
\tau}) &=&\int_{\overline{\mathcal{V}}}[\breve{\tau}(~^{s}\widehat{\mathbf{R}%
}+|h\mathcal{D}\breve{f}|+|v\mathcal{D}\breve{f}|)^{2}+\breve{f}-2m]\breve{%
\mu}dV,  \label{2npf2}
\end{eqnarray}%
} where the new scaling function $\breve{f}$ satisfies $\int_{\overline{%
\mathcal{V}}}\breve{\mu}dV=1$ for $\breve{\mu}=\left( 4\pi \breve{\tau}%
\right) ^{-m}e^{-\breve{f}}$ and $\breve{\tau}>0.$
\end{lemma}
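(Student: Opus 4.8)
The plan is to establish the claimed equivalence of the two families of Perelman functionals by reducing everything to the distortion relation $\overline{\nabla}=\widehat{\mathcal{D}}+\widehat{\mathcal{Z}}$ already proven in Theorem--Definition and displayed in (\ref{distrel1}), together with the induced distortion of the Ricci tensor (\ref{driccidist}). The key observation is that both functionals in (\ref{2pfrs}) are integrals over $\overline{\mathcal{V}}$ of scalar densities built from the scalar curvature $\ _{\shortmid}R$, the gradient term $|\overline{\nabla}f|^2$, and the weight $e^{-f}$. Since $\overline{\mathbf{g}}(\tau)=\widetilde{\mathbf{g}}(\tau)$ is the \emph{same} family of metrics (merely described in two adapted frames), the volume form $dV$ is unchanged, and only the connection--dependent scalar $\ _{\shortmid}R$ and the scaling function require re--expression. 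First I would substitute $\ _{\shortmid}R=\ ^{s}\widehat{\mathbf{R}}+\ ^{s}\widehat{\mathbf{Z}}$, obtained by contracting (\ref{driccidist}) with $\mathbf{g}^{\overline{\alpha}\overline{\beta}}$, into the sub--integral expressions.

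Next I would absorb the distortion scalar $\ ^{s}\widehat{\mathbf{Z}}$ and the frame--induced cross terms into the redefinition of the scaling function $f\mapsto\breve{f}$. Concretely, the claim is that one can choose $\breve{f}$ so that
\begin{equation*}
(\ _{\shortmid}R+|\overline{\nabla}f|^{2})e^{-f}=(\ _{s}^{F}R+|\ ^{F}\mathbf{D}\breve{f}|^{2})e^{-\breve{f}}+\Phi,
\end{equation*}
with the remainder $\Phi$ a total N--adapted divergence, so that $\int_{\overline{\mathcal{V}}}\Phi\,dV=0$. The gradient term splits naturally under the $h$--$v$ decomposition into $|h\mathcal{D}\breve{f}|^{2}+|v\mathcal{D}\breve{f}|^{2}$ because $\widehat{\mathcal{D}}$ preserves the N--connection splitting, whereas $\overline{\nabla}$ does not; this is precisely where the passage to N--adapted variables pays off. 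For the $\mathcal{W}$--functional I would additionally rescale $\tau\mapsto\breve{\tau}=(\ ^{h}\tau,\ ^{v}\tau)$ and adjust the measure to $\breve{\mu}=(4\pi\breve{\tau})^{-m}e^{-\breve{f}}$, checking that the normalization $\int_{\overline{\mathcal{V}}}\breve{\mu}\,dV=1$ is preserved and that the corresponding remainder $\Phi_{1}$ again integrates to zero.

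The main obstacle I anticipate is verifying that the remainder terms $\Phi$ and $\Phi_{1}$ are genuinely total divergences with respect to the canonical d--connection, so that they vanish upon integration over the closed region $\overline{\mathcal{V}}$. This requires an honest computation of the distortion scalars $\ ^{s}\widehat{\mathbf{Z}}$ in terms of the coefficients (\ref{deft}), together with an integration--by--parts argument adapted to the nonholonomic frames (\ref{dderalg}) and (\ref{ddifalg}); the nontrivial anholonomy coefficients $W_{\overline{\gamma}\overline{\delta}}^{\overline{\varphi}}$ produce extra terms that must be shown to cancel or reassemble into further divergences. Since the analogous N--adapted computation was already carried out for nonholonomic manifolds and Lagrange spaces on $TM$ (cited here as Claim 3.1 of \cite{vricci1,vricci2}), I would lean on that precedent and restrict the verification to the genuinely new contributions arising from the Lie algebroid structure functions $C_{\alpha\beta}^{\gamma}$, which enter through $\Omega_{\alpha\beta}^{C}$ in the distortion tensor and are the only feature distinguishing the prolongation Lie algebroid setting from the tangent bundle case.
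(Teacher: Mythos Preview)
Your proposal follows essentially the same route as the paper's own argument: the paper's proof (given in the paragraph preceding the Lemma) also substitutes the distortion relation $\overline{\nabla}=\widehat{\mathcal{D}}+\widehat{\mathbf{Z}}$ into the sub--integral expressions, redefines $f\mapsto\breve{f}$ and $\tau\mapsto\breve{\tau}$, and asserts the existence of remainder terms $\Phi,\Phi_{1}$ with $\int_{\overline{\mathcal{V}}}\Phi\,dV=\int_{\overline{\mathcal{V}}}\Phi_{1}\,dV=0$, citing Claim~3.1 of \cite{vricci1,vricci2} for the analogous $TM$ computation. Your write--up is in fact more explicit than the paper's about \emph{why} the remainders should vanish (total N--adapted divergences) and about isolating the genuinely new algebroid contributions through $C_{\alpha\beta}^{\gamma}$; the paper simply asserts these properties without further justification.
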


In this section, we omit details and proofs which are straightforward
consequences of those presented in \cite{gper1,gper2,gper3,caozhu}. For our
constructions, we consider operators defined by $L$ via $\overline{\nabla }$
on $\mathcal{T}^{E}\mathbf{E.}$ Using distortions to $\widehat{\mathcal{D}}$
with $\widehat{\mathbf{Z}}$ completely defined by $\widetilde{\mathbf{g}},$
we can study Lagrange--Ricci flows on prolongation Lie algebroids as
canonical nonholonomic deformations of Riemannian evolution on associated
vector/tangent bundles.

We can construct the canonical Laplacian operator, $\widehat{\Delta }:=$ $%
\widehat{\mathcal{D}}$ $\widehat{\mathcal{D}}$ determined by the canonical
d--connection $\widehat{\mathcal{D}},$ a ''standard'' Laplace operator $%
\overline{\Delta }=\overline{\nabla }\overline{\nabla },$ and consider
parameter $\tau (\chi ),$ $\partial \tau /\partial \chi =-1.$ For simplicity,
we shall not include the normalized term. The distortion (\ref%
{distrel1}) results in
\begin{eqnarray}
\Delta &=&\widehat{\Delta }+~^{Z}\widehat{\Delta },\ \ ^{Z}\widehat{\Delta }=%
\widehat{\mathbf{Z}}_{\overline{\alpha }}\widehat{\mathbf{Z}}^{\overline{%
\alpha }}+[\widehat{\mathbf{D}}_{\overline{\alpha }}(\ \widehat{\mathbf{Z}}^{%
\overline{\alpha }})+\widehat{\mathbf{Z}}_{\overline{\alpha }}\widehat{%
\mathbf{D}}^{\overline{\alpha }}];  \label{distb} \\
\ \overline{R}_{\ \overline{\beta }\overline{\gamma }} &=&\widehat{\mathbf{R}%
}_{\ \overline{\beta }\overline{\gamma }}+\widehat{\mathbf{Z}}ic_{\overline{%
\beta }\overline{\gamma }},\ \ _{s}R=\ _{s}\widehat{\mathbf{R}}+\widetilde{%
\mathbf{g}}^{\overline{\beta }\overline{\gamma }}\widehat{\mathbf{Z}}ic_{%
\overline{\beta }\overline{\gamma }}=\ _{s}\widehat{\mathbf{R}}+\ _{s}%
\widehat{\mathbf{Z}},  \notag \\
\ _{s}\widehat{\mathbf{Z}} &=&\mathbf{g}^{\overline{\beta }\overline{\gamma }%
}\ \widehat{\mathbf{Z}}ic_{\overline{\beta }\overline{\gamma }}=\ _{h}%
\widehat{Z}+\ _{v}\widehat{Z},\ _{h}\widehat{Z}=\widetilde{\mathbf{g}}%
^{\alpha \beta }\ \widehat{\mathbf{Z}}ic_{\alpha \beta },\ _{v}^{F}\widehat{Z%
}=\widetilde{\mathbf{g}}^{AB}\ \widehat{\mathbf{Z}}ic_{AB};  \notag \\
\ _{s}\overline{R} &=&\ _{h}\overline{R}+\ _{v}\overline{R},\ \ _{h}%
\overline{R}:=\widetilde{\mathbf{g}}^{\alpha \beta }\ \overline{R}_{\alpha
\beta },\ _{v}\overline{R}=\widetilde{\mathbf{g}}^{AB}\overline{R}_{AB},
\notag
\end{eqnarray}%
where, for convenience, capital indices $A,B,C...$ are for distinguishing $v$%
--components even the prolongation Lie algebroid is constructed for $\mathbf{%
P=E}.$ Using such deformations and a proof similar to that in Proposition
1.5.3 of \cite{caozhu}, we obtain

\begin{theorem}
\label{2theq1}The Lagrange--Ricci flows for $\widehat{\mathcal{D}}$ \
preserving a symmetric metric structure $\mathbf{\tilde{g}}$ and Lie
algebroid structure for prolongated $\mathcal{T}^{E}\mathbf{E}$ can be
characterized by this system of geometric flow equations:
\begin{eqnarray}
\frac{\partial \widetilde{\mathbf{g}}_{\alpha \beta }}{\partial \chi }
&=&-2\left( \widehat{\mathbf{R}}_{\alpha \beta \ }+\widehat{\mathbf{Z}}%
ic_{\alpha \beta }\right) ,\ \frac{\partial \widetilde{\mathbf{g}}_{AB}}{%
\partial \chi }=-2\left( \widehat{\mathbf{R}}_{AB}+\widehat{\mathbf{Z}}%
ic_{AB}\right) ,  \notag \\
\widehat{\ \mathbf{R}}_{\ \alpha A} &=&-\widehat{\mathbf{Z}}ic_{\alpha A},\
\ \widehat{\mathbf{R}}_{\ A\alpha }=\ \widehat{\mathbf{Z}}ic_{A\alpha },\
\label{frham1a} \\
\ \frac{\partial \widehat{f}}{\partial \chi } &=&-\left( \ \widehat{\Delta }%
+~^{Z}\widehat{\Delta }\right) \widehat{f}+\left| \left( \widehat{\mathbf{D}}%
+\widehat{\mathbf{Z}}\right) \widehat{f}\right| ^{2}-\ _{s}\widehat{\mathbf{R%
}}-\ _{s}\widehat{\mathbf{Z}},  \notag
\end{eqnarray}%
and the property that {\small
\begin{eqnarray*}
&&\frac{\partial }{\partial \chi }\mathcal{F}(\widetilde{\mathbf{g}},%
\widehat{\mathcal{D}},\widehat{f})=\int_{\overline{\mathcal{V}}}[|\widehat{%
\mathbf{R}}_{\alpha \beta \ }+\widehat{\mathbf{Z}}ic_{\alpha \beta }+(%
\widehat{\mathbf{D}}_{\alpha }+\widehat{\mathbf{Z}}_{\alpha })(\widehat{%
\mathbf{D}}_{\beta }+\widehat{\mathbf{Z}}_{\beta })\widehat{f}|^{2}+ \\
&&|\widehat{\mathbf{R}}_{AB\ }+\widehat{\mathbf{Z}}ic_{AB}+(\widehat{\mathbf{%
D}}_{A}+\widehat{\mathbf{Z}}_{A})(\widehat{\mathbf{D}}_{B}+\widehat{\mathbf{Z%
}}_{B})\widehat{f}|^{2}]e^{-\widehat{f}}dV,~\int_{\overline{\mathcal{V}}}e^{-%
\widehat{f}}dV=const.
\end{eqnarray*}%
}
\end{theorem}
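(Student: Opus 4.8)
The plan is to reduce the N--adapted statement to the standard Perelman monotonicity formula written for the auxiliary connection $\overline{\nabla }$ and then to transport it through the distortion relation $\overline{\nabla }=\widehat{\mathcal{D}}+\widehat{\mathbf{Z}}$ (\ref{distrel1}). By Corollary \ref{coroffd}, the pair $(\overline{\mathbf{g}},\overline{\nabla })$ on $\mathcal{T}^{E}\mathbf{E}$ is a genuine (pseudo) Riemannian geometry expressed through the generic off--diagonal metric $g_{\overline{\alpha }\overline{\beta }}$ (\ref{offd}), so that the functional $\ _{\shortmid }\mathcal{F}$ (\ref{2pfrs}) and its $\chi$--evolution are governed by the classical theory. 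First I would record, in these variables, the first variation of $\ _{\shortmid }\mathcal{F}$ under $\delta \overline{\mathbf{g}}_{\overline{\alpha }\overline{\beta }}=v_{\overline{\alpha }\overline{\beta }}$ and $\delta f=h$; after the usual integrations by parts the divergence terms cancel precisely when $f$ obeys the conjugate (backward) heat equation and $\int_{\overline{\mathcal{V}}}e^{-f}\,dV$ is held fixed. Imposing the Ricci flow $\partial _{\chi }\overline{\mathbf{g}}_{\overline{\alpha }\overline{\beta }}=-2\overline{R}_{\overline{\alpha }\overline{\beta }}$ then yields the monotonicity identity $\partial _{\chi }\ _{\shortmid }\mathcal{F}=\int_{\overline{\mathcal{V}}}2|\overline{R}_{\overline{\alpha }\overline{\beta }}+\overline{\nabla }_{\overline{\alpha }}\overline{\nabla }_{\overline{\beta }}f|^{2}e^{-f}\,dV$, exactly as in Proposition 1.5.3 of \cite{caozhu}.

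The second step is to insert the operator distortions into this identity. Using $\overline{\nabla }=\widehat{\mathcal{D}}+\widehat{\mathbf{Z}}$, the Ricci and scalar distortions $\overline{R}_{\overline{\beta }\overline{\gamma }}=\widehat{\mathbf{R}}_{\overline{\beta }\overline{\gamma }}+\widehat{\mathbf{Z}}ic_{\overline{\beta }\overline{\gamma }}$ and $\ _{s}\overline{R}=\ _{s}\widehat{\mathbf{R}}+\ _{s}\widehat{\mathbf{Z}}$, and the Laplacian distortion $\overline{\Delta }=\widehat{\Delta }+\ ^{Z}\widehat{\Delta }$ of (\ref{distb}), together with the equivalence of functionals from the preceding Lemma (which absorbs the surplus into $\Phi ,\Phi _{1}$ with $\int_{\overline{\mathcal{V}}}\Phi \,dV=\int_{\overline{\mathcal{V}}}\Phi _{1}\,dV=0$), I would rewrite the conjugate heat equation as the stated equation for $\widehat{f}$ and the Ricci flow as $\partial _{\chi }\widetilde{\mathbf{g}}=-2(\widehat{\mathbf{R}}+\widehat{\mathbf{Z}}ic)$. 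Decomposing with respect to the N--adapted frame (\ref{dderalg})--(\ref{ddifalg}) splits this evolution into the separate $h$-- and $v$--block equations for $\widetilde{\mathbf{g}}_{\alpha \beta }$ and $\widetilde{\mathbf{g}}_{AB}$ appearing in (\ref{frham1a}). Preservation of the d--metric form (\ref{dm}), i.e. the requirement that the flow generate no mixed $\widetilde{\mathbf{g}}_{\alpha A}$ components, forces the total mixed Ricci to vanish and hence produces the constraints $\widehat{\mathbf{R}}_{\alpha A}=-\widehat{\mathbf{Z}}ic_{\alpha A}$ and $\widehat{\mathbf{R}}_{A\alpha }=\widehat{\mathbf{Z}}ic_{A\alpha }$. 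The same $h$--$v$--decomposition applied to the integrand $2|\overline{R}_{\overline{\alpha }\overline{\beta }}+\overline{\nabla }_{\overline{\alpha }}\overline{\nabla }_{\overline{\beta }}f|^{2}$, with the mixed blocks discarded through these constraints, reproduces the two squared $h$-- and $v$--terms of the final monotonicity formula and the side condition $\int_{\overline{\mathcal{V}}}e^{-\widehat{f}}\,dV=const$.

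The hard part will be the torsion of $\widehat{\mathcal{D}}$. Because $\widehat{\mathcal{D}}$ is not torsion--free, the symmetry of second covariant derivatives and the divergence identities that drive Perelman's cancellations do not hold verbatim in the N--adapted variables; the discrepancies are exactly the pieces collected in $\widehat{\mathbf{Z}}$, $\ ^{Z}\widehat{\Delta }$ and in the correction densities $\Phi ,\Phi _{1}$. I would therefore have to verify carefully that the boundary and divergence contributions on $\overline{\mathcal{V}}$ still vanish after distortion -- here one uses that all algebroid structure functions depend only on the base coordinates $x^{i}$, so the induced effective metric is $\partial /\partial y^{\alpha }$--stationary (Killing) in the sense of (\ref{einstmeq}), which annihilates the $v$--direction boundary terms -- and that the mixed--block cancellations leading to the constraints are compatible with metric compatibility $\widehat{\mathcal{D}}\widetilde{\mathbf{g}}=0$ and the prescribed torsion of the canonical $\widehat{\mathcal{D}}$. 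Establishing these cancellations, rather than performing the formal substitution of the distortion relations, is the delicate bookkeeping; it is, however, entirely parallel to the tangent bundle case of \cite{vricci1,vricci2}, to which the present computation reduces when the structure functions $C_{\ \beta \gamma }^{\alpha }$ vanish.
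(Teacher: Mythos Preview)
Your proposal is correct and follows essentially the same route as the paper: both reduce to the classical Perelman monotonicity for $(\overline{\mathbf{g}},\overline{\nabla})$ via Proposition~1.5.3 of \cite{caozhu}, then transport through the distortion $\overline{\nabla }=\widehat{\mathcal{D}}+\widehat{\mathbf{Z}}$ and the Lemma preceding the theorem, with the mixed constraints arising from the demand that the flow keep the metric in d--metric form. The paper's own proof is in fact terser than yours---it simply invokes (\ref{distb}), cites \cite{vricci1,vricci2} for the analogous tangent--bundle computation, and records the alternative system (\ref{rfcandc}) obtained after a further rescaling of $\widehat{f}$---so your more explicit bookkeeping of the integration--by--parts and the $h$--$v$ block decomposition is a faithful expansion of what the paper sketches.
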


\begin{proof}
For distortions (\ref{distb}), we can redefine the scaling functions from
above Lemma in different form. Similarly to \cite{vricci1,vricci2} we can
construct on $\mathcal{T}^{E}\mathbf{E}$ the corresponding system of Ricci
flow evolution equations for $\widehat{\mathcal{D}},\mathbf{\mathbf{\ }}$
\begin{eqnarray}
\frac{\partial \widetilde{\mathbf{g}}_{\alpha \beta }}{\partial \chi } &=&-2%
\widehat{\mathbf{R}}_{\alpha \beta \ },\frac{\partial \widetilde{\mathbf{g}}%
_{AB}}{\partial \chi }=-2\widehat{\mathbf{R}}_{AB},  \label{rfcandc} \\
\ \frac{\partial \widehat{f}}{\partial \chi } &=&-\widehat{\Delta }\widehat{f%
}+\left| \widehat{\mathcal{D}}\widehat{f}\right| ^{2}-\ _{h}\widehat{R}-\
_{v}\widehat{R},  \notag
\end{eqnarray}%
which can be derived from the functional $\widehat{\mathcal{F}}(\widetilde{\mathbf{g}%
},\widehat{\mathcal{D}},\widehat{f})=~\int_{\overline{\mathcal{V}}}(\ _{s}%
\widehat{R}+|\widehat{\mathbf{D}}\widehat{f}|^{2})$ $e^{-\widehat{f}}\ dV.$ The
conditions $\widehat{R}_{\alpha A}=0$ and $\widehat{R}_{A\alpha }=0$ must be
imposed in order to model evolution only with symmetric metrics. \ $\square $
\end{proof}

\vskip5pt

We note that under Ricci flows the N--adapted frames also depend on
parameter $\chi $ following certain evolution formulas. For $TM,$ such a
Corollary is proven in Ref. \cite{vricci2}. Re--defining indices for $%
\mathcal{T}^{E}\mathbf{E,}$ those formulas can be used for flow evolution of
frames of type (\ref{dderalg}) and (\ref{ddifalg}).

Finally, we discuss the statistical model which can be elaborated for Ricci
flows of mechanical systems. By definition, the functional $\ _{\shortmid }%
\mathcal{W}$ is analogous to minus entropy \cite{gper1} and this property
was proven for metric compatible nonholonomic and Lagrange--Finsler Ricci
flows \cite{vricci1,vricci2} with functionals $\widehat{\mathcal{W}}$ \
written for $\widehat{\mathbf{D}}.$ Similar constructions can be performed
on $\mathcal{T}^{E}\mathbf{E.}$

Let us consider a partition function $Z=\int \exp (-\beta E)d\omega (E)$ for the canonical ensemble at temperature $\beta ^{-1}$ being defined by the measure taken to be the density of states $\omega (E).$ The thermodynamical values are computed for average energy, $\ \left\langle E\right\rangle :=-\partial \log Z/\partial \beta ,$ entropy $S:=\beta \left\langle E\right\rangle +\log Z$ and fluctuation $\sigma :=\left\langle \left(E-\left\langle E\right\rangle \right) ^{2}\right\rangle =\partial ^{2}\log Z/\partial \beta ^{2}.$

\begin{theorem}
\label{theveq} Any family of Lagrangians under Ricci evolution on $\mathcal{T%
}^{E}\mathbf{E}$ is characterized by thermodynamic values {\small
\begin{eqnarray*}
\left\langle\tilde{E}\right\rangle &=&-\tilde{\tau}^{2} \int_{\overline{%
\mathcal{V}}}(\ _{s}\widehat{R}+|\widehat{\mathcal{D}}\tilde{f}|^{2}-\frac{m%
}{\widehat{\tau }}) \tilde{\mu}\ dV, \\
\tilde{S} &=& -\int_{\overline{\mathcal{V}}} [ \tilde{\tau}(\ _{s}\widehat{R}%
+|\widehat{\mathcal{D}}\tilde{f}|^{2}) +\tilde{f}-2m] \tilde{\mu}\ dV, \\
\tilde{\sigma} &=&2\ \tilde{\tau}^{4}~\int_{\overline{\mathcal{V}}}[|%
\widehat{\mathbf{R}}_{\overline{\alpha }\overline{\beta }}+\widehat{\mathcal{%
D}}_{\overline{\alpha }}\widehat{\mathcal{D}}_{\overline{\beta }}\tilde{f}-%
\frac{1}{2\tilde{\tau}}\mathbf{\tilde{g}}_{\overline{\alpha }\overline{\beta
}}|^{2}]\tilde{\mu}\ dV.
\end{eqnarray*}
}
\end{theorem}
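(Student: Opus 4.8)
The plan is to follow Perelman's statistical analogy (Section 5 of \cite{gper1}, reviewed as Proposition 1.5.3 in \cite{caozhu}) transported into the N--adapted variables $(\widetilde{\mathbf{g}},\widehat{\mathcal{D}})$ already used in Theorem \ref{2theq1}. First I would identify the temperature $\beta ^{-1}$ of the canonical ensemble with $\tilde{\tau}$ (so the inverse temperature is $\beta =\tilde{\tau}^{-1}$), and posit the partition function through its logarithm
\begin{equation*}
\log Z=\int_{\overline{\mathcal{V}}}\left( -\tilde{f}+m\right) \tilde{\mu}\ dV,\qquad \tilde{\mu}=\left( 4\pi \tilde{\tau}\right) ^{-m}e^{-\tilde{f}},\quad \int_{\overline{\mathcal{V}}}\tilde{\mu}\ dV=1,
\end{equation*}
which is the $\dim \overline{\mathcal{V}}=2m$ analog of Perelman's $\int (-f+n/2)(4\pi \tau )^{-n/2}e^{-f}dV$. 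Because $\overline{\mathbf{g}}(\tau )=\widetilde{\mathbf{g}}(\tau )$ is induced by the same regular $L(\tau )$ and $\overline{\nabla }=\widehat{\mathcal{D}}+\widehat{\mathbf{Z}}$ (\ref{distrel1}), every identity may be derived first for the Levi--Civita data $(\overline{\mathbf{g}},\overline{\nabla })$ by the classical computation and then rewritten for $\widehat{\mathcal{D}}$ via the distortion dictionary, the vanishing integrals $\int \Phi \,dV=\int \Phi _{1}\,dV=0$ guaranteeing equality of the two bookkeepings.

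Next I would compute the average energy from $\langle \tilde{E}\rangle =-\partial \log Z/\partial \beta =\tilde{\tau}^{2}\,\partial \log Z/\partial \tilde{\tau}$, using the N--adapted evolution equations (\ref{rfcandc})--(\ref{frham1a}) for $\tilde{f}$ and for the density $\tilde{\mu}$ together with N--adapted integration by parts, legitimate since $\widehat{\mathcal{D}}\widetilde{\mathbf{g}}=0$. Differentiating under the integral sign and discarding divergence terms yields $\partial \log Z/\partial \tilde{\tau}=-\int_{\overline{\mathcal{V}}}(\ _{s}\widehat{R}+|\widehat{\mathcal{D}}\tilde{f}|^{2}-m/\tilde{\tau})\tilde{\mu}\,dV$, hence the stated $\langle \tilde{E}\rangle $. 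The entropy is then purely algebraic: $\tilde{S}=\beta \langle \tilde{E}\rangle +\log Z$ reduces, after inserting the two previous expressions and using $\int \tilde{\mu}\,dV=1$, to $-\int_{\overline{\mathcal{V}}}[\tilde{\tau}(\ _{s}\widehat{R}+|\widehat{\mathcal{D}}\tilde{f}|^{2})+\tilde{f}-2m]\tilde{\mu}\,dV$, i.e. minus the functional $\mathcal{W}(\widetilde{\mathbf{g}},\widehat{\mathcal{D}},\tilde{f},\tilde{\tau})$ (\ref{2npf2}), confirming the entropy interpretation.

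The main work is the fluctuation $\tilde{\sigma}=\partial ^{2}\log Z/\partial \beta ^{2}$. I would differentiate the energy expression once more in $\tilde{\tau}$, again substituting the flow equations for $\partial _{\tilde{\tau}}\widetilde{\mathbf{g}}_{\overline{\alpha }\overline{\beta }}$ and $\partial _{\tilde{\tau}}\tilde{f}$, and then complete the square. The target integrand is the soliton density $|\widehat{\mathbf{R}}_{\overline{\alpha }\overline{\beta }}+\widehat{\mathcal{D}}_{\overline{\alpha }}\widehat{\mathcal{D}}_{\overline{\beta }}\tilde{f}-\frac{1}{2\tilde{\tau}}\widetilde{\mathbf{g}}_{\overline{\alpha }\overline{\beta }}|^{2}$, which is precisely the squared--norm integrand already produced by the monotonicity computation $\partial _{\chi }\mathcal{F}$ recorded in Theorem \ref{2theq1}; reusing that identity converts the twice--differentiated integral into the claimed $2\tilde{\tau}^{4}\int_{\overline{\mathcal{V}}}|\cdots |^{2}\tilde{\mu}\,dV$.

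The hard part, and the step where care is genuinely required, is controlling the non--Riemannian corrections in these integrations by parts: $\widehat{\mathcal{D}}$ carries nontrivial torsion ($\widehat{T}_{\ \beta \gamma }^{\alpha }=C_{\ \beta \gamma }^{\alpha }$ and the mixed components of (\ref{dtors})) and the N--adapted frames (\ref{dderalg})--(\ref{ddifalg}) obey nonholonomy relations with coefficients $W_{\overline{\alpha }\overline{\beta }}^{\overline{\gamma }}$, so the naive ``divergence terms integrate to zero'' step must be audited. The clean way around this is not to integrate by parts with $\widehat{\mathcal{D}}$ directly, but to carry out the classical Perelman computation with the Levi--Civita $\overline{\nabla }$ (for which the divergence theorem holds verbatim) and only afterwards distort to $\widehat{\mathcal{D}}$ through $\overline{\nabla }=\widehat{\mathcal{D}}+\widehat{\mathbf{Z}}$, absorbing all torsion and anholonomy contributions into $\widehat{\mathbf{Z}}$ and the integral--free terms $\Phi ,\Phi _{1}$. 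One must then check that these corrections leave the three thermodynamic expressions unchanged, which is exactly what the equivalence $\overline{\mathbf{g}}(\tau )=\widetilde{\mathbf{g}}(\tau )$ and the Lemma on the re--scaled functionals (\ref{2npf1})--(\ref{2npf2}) are designed to guarantee.
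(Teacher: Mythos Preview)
Your proposal is correct and follows essentially the same approach as the paper: the paper's proof merely cites the analogous computations in \cite{caozhu,vricci1,vricci2} and records the partition function $\tilde{Z}=\exp \left\{ \int_{\overline{\mathcal{V}}}[-\tilde{f}+m]\,\tilde{\mu}\,dV\right\}$, which is exactly the $\log Z$ you posited. Your write-up is in fact more detailed than the paper's, and your device of performing Perelman's computation first with $\overline{\nabla}$ and only afterwards distorting to $\widehat{\mathcal{D}}$ via (\ref{distrel1}) is precisely the strategy the paper adopts throughout the section.
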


\begin{proof}
Similar computations, in not N--adapted, or N--adapted forms, are given  in %
\cite{caozhu,vricci1,vricci2}. On prolongation Lie algebroids, we have to
use the partition function $\tilde{Z}=\exp \left\{ ~\int_{\overline{\mathcal{%
V}}}[-\tilde{f}+m]~\tilde{\mu}dV\right\} .$\ $\square $
\end{proof}

Finally, we note that this paper is a partner of \cite{vmedjm} and \cite{rf2016}, see \cite{alexiou1,alexiou2} on applications in modified gravity theories.

\section*{Acknowledgments}
 SV   research is related to his project activity at UAIC, a DAAD fellowship, and  the Program IDEI, PN-II-ID-PCE-2011-3-0256.

\end{document}